\documentclass[aps,pra,twocolumn,superscriptaddress,longbibliography]{revtex4-2}

\usepackage{amssymb,amsmath,amsthm,bbm,bbold}
\usepackage{amsfonts}
\usepackage{graphicx}
\usepackage{mathtools}
\usepackage[breaklinks,colorlinks,allcolors=blue]{hyperref}
\usepackage[normalem]{ulem}
\usepackage[T1]{fontenc}
\usepackage{color}
\usepackage{xcolor}
\usepackage{amssymb}

\def\CC{\mathbbm{C}}

\newcommand{\Tr}{\mathrm{Tr}}

\newcommand{\bra}[1]{\mbox{$\langle #1 |$}}
\newcommand{\ket}[1]{\mbox{$| #1 \rangle$}}

\newtheorem{thm}{Theorem}

\newtheorem{cor}[thm]{Corollary}

\begin{document}
\title{Operational bounds and diagnostics for coherence in energy transfer}
\author{Julia Liebert}
\affiliation{Department of Chemistry, Princeton University, Princeton, NJ 08540, United States}
\author{Gregory D. Scholes}
\email{gscholes@princeton.edu}
\affiliation{Department of Chemistry, Princeton University, Princeton, NJ 08540, United States}

\begin{abstract}
Excitation energy transfer in light-harvesting aggregates is highly efficient, yet whether quantum coherence plays an operational role in transport remains debated. A central challenge is that coherence is usually inferred from spectroscopic signatures, whereas transport performance is assessed through specific observables and depends on both the open system dynamics and the initial state preparation.
Here we develop a resource theoretic approach that quantifies the maximum change that initial site-basis coherence can induce in a chosen readout under fixed reduced dynamics. The central quantity is the resource impact functional, which yields state independent, readout specific bounds on coherence-induced changes in signals and transport figures of merit.
We apply the framework to two models. For a donor-acceptor dimer, we analyse coherence sensitivity across coupling and bath-timescale regimes and bound trapping efficiency and average transfer time in terms of the impact functional. For a multi-site chain with terminal trapping, we derive rigorous criteria that distinguish population placement from sensitivity to initial state site-basis coherence. These include upper bounds on the largest advantage over incoherent preparations, necessary delocalization requirements for achieving a prescribed improvement, and a simple pairwise sufficient condition that can be checked from local information. For quasi-local reduced dynamics, we further obtain a Lieb-Robinson-type bound that constrains when coherence prepared in a distant region can influence a localized readout at finite times.
Together, these results provide operational diagnostics and rigorous bounds for benchmarking coherence effects and for identifying regimes in which they are necessarily negligible or potentially relevant in excitonic transport models.
\end{abstract}
\date{\today}   
\maketitle

\section{Introduction\label{sec:intro}}

Excitation energy transfer in molecular aggregates underpins photosynthetic light harvesting and has long served as a canonical setting for studying open quantum dynamics in complex condensed-phase environments \cite{SFOG11, MOAGGS17, SFCAGBC17, JM18, FS24, JZMTMD26}. Early ultrafast measurements reported oscillatory signals that were initially interpreted as evidence for coherent excitonic dynamics \cite{Jonas03, BMTSF04, ECRAMCBF07}. Subsequent work clarified that both the
microscopic origin and the functional significance of such oscillations can be subtle. Vibronic mixing and ground state  vibrational coherences can generate or prolong quantum beats in spectroscopic observables that may resemble purely electronic coherence \cite{CKPM12, TPJ13, CS15}. The excitation protocol also matters. Phase controlled ultrafast pulses can prepare coherent superpositions that are not directly comparable to excitation under natural incoherent sunlight
\cite{MV10, BS12, HSB13, CS15, CBSS15}. At the same time, coherent multidimensional spectroscopy is
a sensitive probe of excitonic Hamiltonians and system-bath couplings in pigment-protein complexes, and it provides benchmarks for microscopic models \cite{Cho08, CF09}.

These subtleties motivate shifting the focus from the mere presence of coherence to its operational impact on specific observables. A large body of theoretical work has addressed related issues using concrete models, including environment-assisted quantum transport \cite{MRLAG08, PH08, CCDHP09, RMKLAG09, KAG12}, disorder-dependent transport \cite{WRLGS97, MMG99, KMAMBG17}, and vibronic models \cite{WM11, LPCSLP15, JRWS18}. Quantitative conclusions, however, often depend on model choices,
parameter regimes, and assumed state preparations. This makes it difficult to give task-specific, state-independent statements that rigorously bound the maximum change that initial site-basis coherence can induce in a chosen readout. 

Here we develop a complementary, process-level perspective based on quantum resource theory. Building on the resource theory of coherence \cite{BCP14, WY16, CG19}, but without assuming that the physical dynamics is ``free'' in the resource-theoretic sense, we employ the operational framework introduced in Ref.~\cite{LS26-RT}. Specifically, we use as a central tool the resource impact functional $\mathcal{C}_M(\Lambda_t)$, which quantifies the largest change that initial site-basis coherence can induce in the expectation value of a chosen readout observable $M$ under a fixed dynamical map $\Lambda_t$. By construction, $\mathcal{C}_M(\Lambda_t)$ is state-independent and tailored to a specific readout. 

We apply this framework to excitation energy transfer with trapping. First, a donor-acceptor dimer is studied and the time-resolved impact functional is evaluated across coherent, incoherent, and intermediate system-bath coupling regimes.
In particular, we benchmark regimes where a time-local Lindblad description is appropriate and regimes where bath memory and intermediate coupling necessitate nonperturbative methods, which we treat using hierarchical equations of motion (HEOM) \cite{TK89, IF09-JCP, Tanimura20, IS20}. We show that $\mathcal{C}_M(\Lambda_t)$ faithfully captures the qualitative coherence trends across these regimes in a state-independent manner and, as a time-resolved quantity, identifies time windows in which coherence can affect the chosen readout most strongly. We further connect $\mathcal{C}_M(\Lambda_t)$ to standard transport figures of merit, including trapping efficiency and average transfer time, by expressing them in terms of effective measurement operators and deriving corresponding coherence-sensitive upper bounds.

At a coarse-grained level, aspects of exciton migration in photosynthetic antennae can be idealized by a one-dimensional chain with a terminal trap. In this geometry, two mechanisms can enhance the trapped population: (i) concentrating population near the trap and (ii) exploiting site-basis delocalization and interference during the transport. To quantify when coherence can and cannot be operationally relevant for trapping, we derive two complementary types of results expressed in terms of the resource impact functional: no-go bounds that certify regimes in which any coherence-induced improvement is necessarily negligible, and sufficient conditions under which coherent preparations can provably outperform incoherent (site localized) baselines.
In addition, we relate support-restricted impact functionals to quasi-locality estimates for open system dynamics, obtaining a Lieb-Robinson-type ``coherence light cone'' that bounds how rapidly coherence initially supported on one domain can influence a spatially separated local readout \cite{LR72, Poulin10, BK12, SEK19, TYR24}.

The paper is organized as follows. In Sec.~\ref{sec:concepts} we recap the definition and key properties of $\mathcal{C}_M(\Lambda_t)$ and explain how it can be constructed from HEOM trajectories. Sec.~\ref{sec:DAM} applies these tools to the donor-acceptor model across coupling and bath timescale regimes and relates $\mathcal{C}_M(\Lambda_t)$ to trapping figures of merit. Finally, in Sec.~\ref{sec:Kette} we discuss the multi-site donor chain and provide mathematical bounds relating delocalization and trapping performance via $\mathcal{C}_M(\Lambda)$, and the coherence light-cone analysis based on quasi-locality.

\section{Key concepts\label{sec:concepts}}

This section fixes notation and recalls the definitions needed in the sequel.
The central quantity is the resource impact functional $\mathcal{C}_M(\Lambda)$, which quantifies the largest change in the expectation value of an observable $M$ that can be attributed to a chosen resource present in the initial state and processed by a dynamical map $\Lambda$. We focus on the resource theory of coherence, motivated by excitation-energy transfer.
A key practical point is that $\mathcal{C}_M(\Lambda)$ can be evaluated from HEOM data even when the dynamical map $\Lambda$ is not available in closed form (in contrast to, e.g., a Lindblad generator). This will be used for the donor-acceptor model in the intermediate coupling regime in Sec.~\ref{sec:DAM}.

\subsection{Resource impact functional\label{sec:CM-definition}}

Let $\mathcal{H}$ be a $d$-dimensional Hilbert space and let $\mathcal{D}(\mathcal{H})$ denote the set of density operators on $\mathcal{H}$. We consider a completely positive trace preserving (CPTP) map $\Lambda:\mathcal{D}(\mathcal{H})\to\mathcal{D}(\mathcal{H})$ and a resource-destroying map $\mathcal{G}$ associated with a given resource theory.
For a Hermitian observable $M=M^\dagger$, the resource impact functional introduced in Ref.~\cite{LS26-RT} is
\begin{equation}\label{eq:capacity-sup}
\mathcal{C}_M(\Lambda) := \sup_{\rho\in\mathcal{D}(\mathcal{H})} \Bigl| \Tr\Bigl[ M\,\Lambda \bigl(\rho-\mathcal{G}(\rho)\bigr)\Bigr] \Bigr| \geq 0\,,
\end{equation}
which measures the maximal change in $\langle M\rangle$ between evolving $\rho$ and evolving its resource-free counterpart  $\mathcal{G}(\rho)$. For notational simplicity we suppress the explicit dependence of $\mathcal{C}_M$ on $\mathcal{G}$.
For linear CPTP resource-destroying maps as used in this paper, Eq.~\eqref{eq:capacity-sup} simplifies to 
\begin{equation}\label{eq:capacity}
\mathcal{C}_M(\Lambda) = \|B_{M,\Lambda}\|_{\infty}\,,\quad B_{M,\Lambda}:=(\mathrm{id}-\mathcal{G}^\dagger)\!\bigl(\Lambda^\dagger(M)\bigr)\,,
\end{equation}
where $\|\cdot\|_\infty$ denotes the operator (spectral) norm, which for Hermitian $B_{M,\Lambda}$ equals the eigenvalue of largest magnitude. Consequently, $\mathcal{C}_M(\Lambda)$ can be obtained by computing the extremal eigenvalues of $B_{M,\Lambda}$, without performing an optimization over states as in Eq.~\eqref{eq:capacity-sup}.

We now consider the resource theory of coherence \cite{BCP14, CG19}, and choose a reference basis $\mathcal{B}=\{\ket{i}\}_{i=1}^d$ (the site basis in our application), for which the free states are diagonal in $\mathcal{B}$. The corresponding resource-destroying map is the complete dephasing channel
\begin{equation}\label{eq:dephasing-map}
\mathcal{G}(\rho)=\sum_{i=1}^d \ket{i}\!\bra{i}\rho\ket{i}\!\bra{i}\,,
\end{equation} 
which removes all off-diagonal coherences in the chosen basis $\mathcal{B}$. This map is idempotent, $\mathcal{G}^2=\mathcal{G}$, as required for a resource-destroying map \cite{LHL17}, and it is self-adjoint, $\mathcal{G}^\dagger=\mathcal{G}$.

For a one-parameter family of dynamical maps $\{\Lambda_t\}_{t\ge 0}$ with $\Lambda_0=\mathrm{id}$, we define the time-resolved resource impact functional $\mathcal C_M(t):=\mathcal C_M(\Lambda_t)$. 
The function $t\mapsto \mathcal C_M(t)$ quantifies, at each time $t$, the maximal change in the readout $\langle M\rangle_t$ that can be attributed to the chosen resource in the initial state, and thereby identifies time windows in which the observable $M$ is most sensitive to that resource.

For time-homogeneous Markovian dynamics described by a (time-independent) Gorini-Kossakowski-Sudarshan-Lindblad (GKSL) generator $\mathcal{L}$, the maps form a CPTP semigroup $\Lambda_t=e^{t\mathcal{L}}$ and $M_t:=\Lambda_t^\dagger(M)=e^{t\mathcal L^\dagger}(M)$ follows directly from the adjoint master equation $\tfrac{\mathrm{d}}{\mathrm{d}t} M_t=\mathcal L^\dagger(M_t)$ with $M_0=M$. In many energy-transfer settings, however, the Markovian assumptions underlying a Lindblad semigroup (weak coupling, short bath correlation time) are not justified, e.g., for strong system-bath coupling, structured environments, or pronounced memory effects. In such cases, one must resort to nonperturbative approaches. 
Here we employ the hierarchical equations of motion (HEOM) method, which yields the reduced dynamics through a linear time-local evolution in an extended space of auxiliary density operators (ADOs). While $\Lambda_t$ is typically not available in closed form, $\mathcal C_M(t)$ can be obtained by reconstructing the Heisenberg-evolved observable $M_t=\Lambda_t^\dagger(M)$ from a set of forward Schr\"odinger picture trajectories, as described next.

\subsection{Construction of $\mathcal{C}_M(\Lambda_t)$ for HEOM \label{sec:heom-CM}}

Due to Eq.~\eqref{eq:capacity}, it suffices to reconstruct $M_t:=\Lambda_t^\dagger(M)$ rather than the full superoperator $\Lambda_t$. For a time-independent system Hamiltonian and a stationary bath, HEOM yields a linear equation of motion for the reduced density operator together with a finite set of ADOs up to hierarchy depth $L_{\max}$. Collecting all ADOs in a vector $\mathbf R_t$ (with the time-dependent system's density operator $\rho_t$ as the first component) one obtains \cite{TK89, IF09-JCP, KKRH11, SS12, IS20, Tanimura20}
\begin{equation}
\frac{d}{dt}\mathbf R_t=\mathcal L_{\mathrm{HEOM}}\,\mathbf R_t\,, \quad
\mathbf R_t=e^{t\mathcal L_{\mathrm{HEOM}}}\mathbf R_0\,.
\end{equation}
For a factorized initial total state the initial ADOs vanish and $\mathbf R_0=\iota(\rho_0):=(\rho_0,0,\dots,0)$. Let $\Pi$ denote the projection onto the first component, so that $\rho_t=\Pi(\mathbf R_t)$. The induced reduced map is
\begin{equation}
\Lambda_t=\Pi\circ e^{t\mathcal L_{\mathrm{HEOM}}}\circ \iota\,.
\end{equation}
Although the adjoint map can be written explicitly in terms of $\mathcal L_{\mathrm{HEOM}}^\dagger$, in practice we reconstruct $M_t=\Lambda_t^\dagger(M)$ from forward trajectories via the duality relation
\begin{equation}\label{eq:Mt-duality}
\Tr[M_t X]=\Tr[\Lambda_t^\dagger(M)X]=\Tr[M \Lambda_t(X)]\,, \,\, X\in\mathcal B(\mathcal H)\,,
\end{equation}
where $\mathcal{H}$ denotes the system's Hilbert space without environment degrees of freedom. 
Choosing $X=\ket{k}\!\bra{l}$ yields the matrix elements
\begin{equation}\label{eq:Mt-elements}
(M_t)_{lk} = \Tr \big[M \Lambda_t(\ket{k}\!\bra{l})\big]\,.
\end{equation}
Assuming $M$ is Hermitian (as for physical observables), $M_t$ is Hermitian for Hermiticity-preserving dynamics, so it suffices to reconstruct, e.g., the upper-triangular elements.

A convenient informationally complete set of $d^2$ density operators is
\begin{equation}
\sigma_{kk}:=\ket{k}\!\bra{k}\,,\quad k=1,\dots,d\,,
\end{equation}
and for all $k<l$,
\begin{align}
\sigma_{kl}^{(+)} &:= \frac{(\ket{k}+\ket{l})(\bra{k}+\bra{l})}{2}\,,\\
\sigma_{kl}^{(i)} &:= \frac{(\ket{k}+i\ket{l})(\bra{k}-i\bra{l})}{2}\,.
\end{align}
For each initial state $\sigma$ we propagate $\rho_t=\Lambda_t(\sigma)$ with HEOM and record
\begin{align}
y_k(t) &:= \Tr \big[M\Lambda_t(\sigma_{kk})\big]\,,\\
y_{kl}^{(+)}(t) &:= \Tr \big[M\Lambda_t(\sigma_{kl}^{(+)})\big]\,,\,\,
y_{kl}^{(i)}(t) := \Tr \big[M\Lambda_t(\sigma_{kl}^{(i)})\big]\,.
\end{align}
Defining
\begin{align}
c_{kl}^{(1)}(t) &:= 2 y_{kl}^{(+)}(t) - y_k(t) - y_l(t)\,,\\
c_{kl}^{(2)}(t) &:= 2 y_{kl}^{(i)}(t) - y_k(t) - y_l(t)\,,
\end{align}
the matrix elements of $M_t$ in the basis $\{\ket{k}\}$ are
\begin{equation}
(M_t)_{kk}=y_k(t)\,,\quad (M_t)_{kl}=\frac{c_{kl}^{(1)}(t)- i\,c_{kl}^{(2)}(t)}{2}\quad (k<l)\,.
\end{equation}
This reconstructs $M_t$ (and hence $\mathcal C_M(t)$) without explicitly forming the full superoperator $\Lambda_t$.

\section{Donor-acceptor model\label{sec:DAM}}

We now apply the time-resolved resource impact functional $\mathcal{C}_M(\Lambda_t)$ to a minimal donor-acceptor dimer model of excitation energy transfer with recombination and irreversible trapping in a sink coupled to the acceptor site. Despite involving only two coherently coupled sites, this spin-boson type excitation energy transfer model serves as a common setting for exploring the interplay of coherent mixing, environmental fluctuations, and trapping \cite{IF09-Redfield,IF09-JCP,IF12}.

Each chromophore $i=1,2$ is modelled as a two-level system with ground and excited states $\ket{i_g}$ and $\ket{i_e}$. We restrict to the zero- and single-excitation manifold spanned by the overall ground state $\ket{g}:=\ket{1_g}\otimes\ket{2_g}$ and the single-excitation states $\ket{D}:=\ket{1_e}\otimes\ket{2_g}$ and $\ket{A}:=\ket{1_g}\otimes\ket{2_e}$ as illustrated in Fig.~\ref{fig:DAM-schematic}. To quantify trapping efficiency while keeping the reduced dynamics trace preserving, we include an additional absorbing sink state $\ket{s}$, so that $\mathcal{H}_S=\mathrm{span}\{\ket{g},\ket{D},\ket{A},\ket{s}\}$. Recombination from $\ket{D}$ and $\ket{A}$ to $\ket{g}$ occurs at rates $\Gamma_D$ and $\Gamma_A$,
respectively. Trapping into the sink occurs from $\ket{A}$ at rate $\kappa_A$ and is taken to be irreversible (no back-coupling from $\ket{s}$ to $\ket{A}$). We focus on acceptor-only trapping here, as an asymmetric trap is the relevant scenario for the environment-assisted quantum transport discussion in Sec.~\ref{sec:efficiency-DAM} in the spirit of Refs.~\cite{MRLAG08, PH08, CCDHP09, RMKLAG09, KAG12}.

\begin{figure}[tb]
\centering
\includegraphics[width=0.75\linewidth]{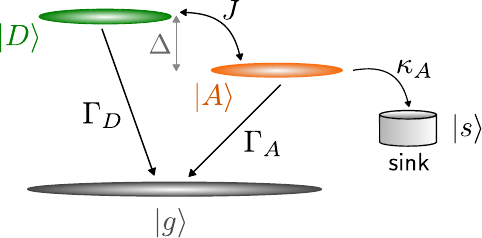}
\caption{Donor-acceptor dimer with recombination to the joint ground state and an absorbing sink
attached to the acceptor. The donor and acceptor are additionally coupled to site-local bosonic
environments (not shown). \label{fig:DAM-schematic}}
\end{figure}

Within the $\{\ket{D},\ket{A}\}$ manifold, the coherent dynamics is generated by
\begin{equation}\label{eq:Hex}
H_{\mathrm{ex}} =   \frac{\Delta}{2}\!\left(\ket{D}\!\bra{D}-\ket{A}\!\bra{A}\right) +J \left(\ket{D}\!\bra{A}+\ket{A}\!\bra{D}\right)\,,
\end{equation}
where $\Delta$ is the site-energy detuning (so that $E_D-E_A=\Delta$ in the absence of coupling) and
$J$ is the electronic coupling. Defining the mixing angle $\theta$ by $\tan(2\theta)=2J/\Delta$, the
eigenstates are
\begin{equation}
\ket{+} = \cos\theta\,\ket{D}+\sin\theta\,\ket{A}\,, \,\, \ket{-} = -\sin\theta\,\ket{D}+\cos\theta\,\ket{A}\,,
\end{equation}
with eigenenergies $\pm\frac{1}{2}\sqrt{\Delta^2+4J^2}$.

Moreover, the donor and acceptor excited states are coupled to a bosonic environment that modulates the site energies. These environmental fluctuations dephase coherences in the site basis and, because the coupling is generally not diagonal in the exciton eigenbasis, also drive population relaxation between exciton states through dissipative bath dynamics. Throughout the remainder of this section, we use a Drude-Lorentz (overdamped Brownian oscillator) spectral density,
\begin{equation}
J_{\mathrm{DL}}(\omega)=\frac{2E_R\,\gamma_c\,\omega}{\gamma_c^2+\omega^2},
\end{equation}
where $E_R$ sets the overall coupling strength and $\gamma_c$ is the cutoff (bath relaxation) rate, corresponding to a correlation time $\tau_c=1/\gamma_c$.

\subsection{Relevant coupling regimes and time scales \label{sec:DAM-parameter-regimes}}

To connect this minimal model to molecular excitation-energy transfer, we organize the parameter space in terms of the relative strength of coherent mixing and environmental reorganization, as well as the degree of bath memory. Following common practice in the excitation energy transfer literature, it is convenient to use two dimensionless ratios: (i) $J/E_R$, comparing electronic coupling $J$ to the reorganization energy $E_R$, and (ii) $\tau_c/J^{-1}$, comparing the bath correlation time $\tau_c$ to the intrinsic coherent timescale $J^{-1}$ \cite{IF09-JCP,IF12}. Here $E_R$ denotes the reorganization energy, i.e., the energy dissipated as the environment relaxes following a vertical (Franck-Condon) excitation. The bath correlation time $\tau_c$ characterizes the decay of environmental correlations and thus the relevance of memory effects. 
Intermediate regimes in which $E_R$ is not small compared to $J$ and/or $\tau_c$ is not much shorter than $J^{-1}$ are widely encountered in photosynthetic EET and motivate treatments beyond standard perturbative Markovian master equations \cite{IF12,CS15,LLHC16}.

In the limit $\tau_c\ll J^{-1}$ (rapidly decaying bath correlations) together with $J/E_R\gg 1$ (weak environmental perturbation relative to coherent mixing), Born-Markov and Redfield-type approaches are expected to be reliable. 
Conversely, when $J\sim E_R$ and/or $\tau_c\sim J^{-1}$, bath reorganization dynamics and non-Markovian effects become important and Markovian reductions can fail quantitatively \cite{IF09-JCP,IF12}. We therefore evaluate $\mathcal{C}_{M_A}(\Lambda_t)$ using HEOM for a system linearly coupled to bosonic environments as explained in Sec.~\ref{sec:heom-CM}. 

Since the optimization in $\mathcal{C}_{M_A}(\Lambda_t)$ is restricted to initial states supported on
the donor-acceptor subspace $\mathcal{H}_{DA}=\mathrm{span}\{\ket{D},\ket{A}\}$, only the $2\times 2$ block of $M_t$ on $\mathcal{H}_{DA}$ is required. 
In particular, for $M_A=\ket{A}\!\bra{A}$ and site dephasing $\mathcal G$ on $\{\ket{D},\ket{A}\}$, $(\mathrm{id}-\mathcal G^\dagger)(M_t)$ removes the diagonal components so that $\mathcal C_{M_A}(t)$ depends only on the off-diagonal element $(M_t)_{DA}$ (indeed, $\mathcal C_{M_A}(t)=|(M_t)_{DA}|$ in the two-level case). This element can be reconstructed from four trajectories corresponding to the initial pure states $\ket{D}$, $\ket{A}$, $(\ket{D}+\ket{A})/\sqrt2$, and $(\ket{D}+i\ket{A})/\sqrt2$, reducing the computational cost relative to a full $d^2$-element reconstruction. The numerics is performed using the QuTiP HEOM implementation \cite{LRCMAPBN23}.

\subsection{$C_M(\Lambda_t)$ for different parameter regimes \label{sec:DAM-int}}

\begin{figure*}[tb]
\centering
\includegraphics[width=\linewidth]{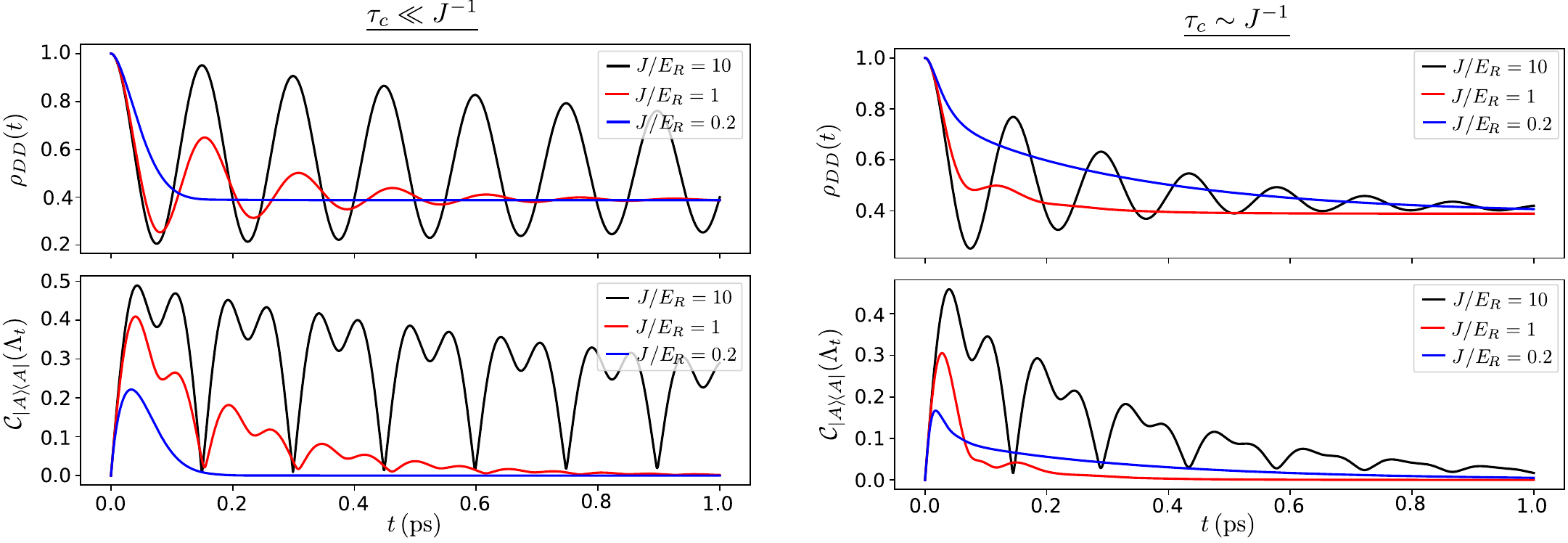}
\caption{Comparison between the donor population $\rho_{DD}(t)$ and $\mathcal{C}_{|A\rangle\!\langle A|}(\Lambda_t)$ for $J = 100\,\mathrm{cm}^{-1}$, $\Delta = 100\,\mathrm{cm}^{-1}$ and $T=300\,\mathrm{K}$, comparing a fast-bath setting $\tau_c \ll J^{-1}$ (left, $\tau_c = 2.65\,\mathrm{fs}$) with an intermediate-timescale setting $\tau_c \sim J^{-1}$ (right, $\tau_c = 53.1\,\mathrm{fs}$) for different ratios $J/E_R\in\{10,1,0.2\}$ at fixed $J$. 
\label{fig:DAM-CM-Markov}}
\end{figure*}

We first consider $M =\ket{A}\!\bra{A}$, such that $\Tr[M\Lambda_t(\rho_0)]$ is the acceptor population at time $t$ and $\rho_0$ is the initial state. In this language, $\mathcal{C}_M(\Lambda)$ quantifies the maximal increase in acceptor yield that any initial site coherence can provide, compared to its dephased counterpart $\mathcal{G}(\rho)$.

Figure~\ref{fig:DAM-CM-Markov} compares the donor population $\rho_{DD}(t)=\langle D|\rho(t)|D\rangle$ for an initially donor-localised excitation $\rho(0)=\ket{D}\!\bra{D}$ with the time-resolved impact functional $\mathcal{C}_{|A\rangle\!\langle A|}(\Lambda_t)$, both for a fast-bath choice $\tau_c\ll J^{-1}$ (left, $\tau_c=2.65\,\mathrm{fs}$) and for an intermediate-timescale choice $\tau_c\sim J^{-1}$ (right, $\tau_c=53.1\,\mathrm{fs}$). We fix $J=100\,\mathrm{cm}^{-1}$ and vary $E_R$ to access $J/E_R\in\{10,1,0.2\}$, using $\Delta=100\,\mathrm{cm}^{-1}$ and $T=300\,\mathrm{K}$. To isolate the role of coherent mixing and bath-induced decoherence without additional irreversible loss, we set $\Gamma_A=\Gamma_D=\kappa_A= 0$. 
The HEOM hierarchy is truncated at depth eight and we retain $N_k=6$ terms in the (Matsubara) expansion of the bath correlation function. 

Recall that $\mathcal{C}_{M_A}(\Lambda_t)$ quantifies the maximal change in acceptor population attributable to coherence in the initial state, relative to the dephased baseline $\mathcal{G}(\rho_0)$. In particular, if $\mathcal{C}_{|A\rangle\!\langle A|}(\Lambda_t)>0$, then the maximiser in Eq.~\eqref{eq:capacity-sup} cannot be an incoherent state such as $\rho_0=\ket{D}\!\bra{D}$, since $\rho_0=\mathcal{G}(\rho_0)$ implies that the yield difference $\Delta Y_t(\rho_0) : =\Tr[M\Lambda_t(\rho_0 - \mathcal{G}(\rho_0))] = 0$ entering Eq.~\eqref{eq:capacity-sup} vanishes. Due to Eq.~\eqref{eq:capacity} any oscillatory structure in $\mathcal{C}_{M_A}(\Lambda_t)$ is induced by the dynamical map $\Lambda_t$ itself. Conversely, when $\mathcal{C}_{M_A}(\Lambda_t)\approx 0$, the acceptor population at that time is approximately insensitive to initial site-basis coherence, consistent with the comparison to $\rho_{DD}(t)$ in Fig.~\ref{fig:DAM-CM-Markov}.

There are two messages:
First, over the parameter range considered, $\mathcal{C}_{M_A}(\Lambda_t)$ follows the qualitative behaviour reported in state-specific analyses of related dimer models \cite{IF09-Redfield, IF09-JCP, IF10-PCCP, IF12}: stronger system-bath couplings (smaller $J/E_R$) and/or towards longer bath memory (larger $\tau_c/J^{-1}$) lead to a faster decaying envelope of $\mathcal{C}_{M_A}(\Lambda_t)$. 
Since $\mathcal{C}_{M_A}(\Lambda_t)$ upper-bounds the maximal coherence induced change in the acceptor population, this faster decay can be interpreted as the dynamics becoming progressively less able to exploit initial site-basis coherence to modulate the acceptor population.
Second, because $\mathcal{C}_{M_A}(\Lambda_t)$ is time-resolved and tied to the readout $M_A$, it directly identifies the time windows in which initial coherence can, in principle, have the largest impact on the acceptor population.
This makes $\mathcal{C}_{M_A}(\Lambda_t)$ a convenient diagnostic for separating parameter regimes where coherence effects are potentially relevant from those where they are provably negligible for a given observable.

In experimentally relevant scenarios, non-free initial states may arise, for instance, from ultrafast excitation preparing coherent superpositions in the site basis or from initially delocalised excitons spanning donor and acceptor sites. For a given preparation $\rho_0$, the utilisation ratio $|\Delta Y_t(\rho_0)|/\mathcal{C}_M(\Lambda_t)\leq 1$ quantifies how closely the realised coherence effect approaches the corresponding upper bound. For restricted families of chemically reachable states, two natural adaptations are (i) restricting the optimisation in Eq.~\eqref{eq:capacity-sup} to $\mathcal{R}(\mathcal{H})\subseteq\mathcal{D}(\mathcal{H})$,
\begin{equation}
\bar{\mathcal{C}}_{M}^{(\mathcal{R})}(\Lambda):= \sup_{\rho\in \mathcal{R}(\mathcal{H})} |\Delta Y(\rho)| \leq \mathcal{C}_M(\Lambda)\,,
\end{equation}
or (ii) optimising over an admissible set $\mathcal P_{\mathrm{prep}}$ of preparation maps $P$ acting on a fixed state $\rho$,
\begin{equation}
\bar{\mathcal{C}}_{M}^{(\mathcal{P})}(\Lambda):= \sup_{P\in\mathcal P_{\mathrm{prep}}} \Big|\Tr \big[M\Lambda\!\big(P(\rho)-\mathcal G(P(\rho))\big)\big]\Big| \leq \mathcal{C}_M(\Lambda).
\end{equation}
In both cases the convenient operator norm expression available for linear $\mathcal G$ is generally lost, whereas the original $\mathcal{C}_M(\Lambda)$ remains straightforward to evaluate without an explicit optimisation over input states as in Eq.~\eqref{eq:capacity}.

\begin{figure*}[tb]
\centering
\includegraphics[width=\linewidth]{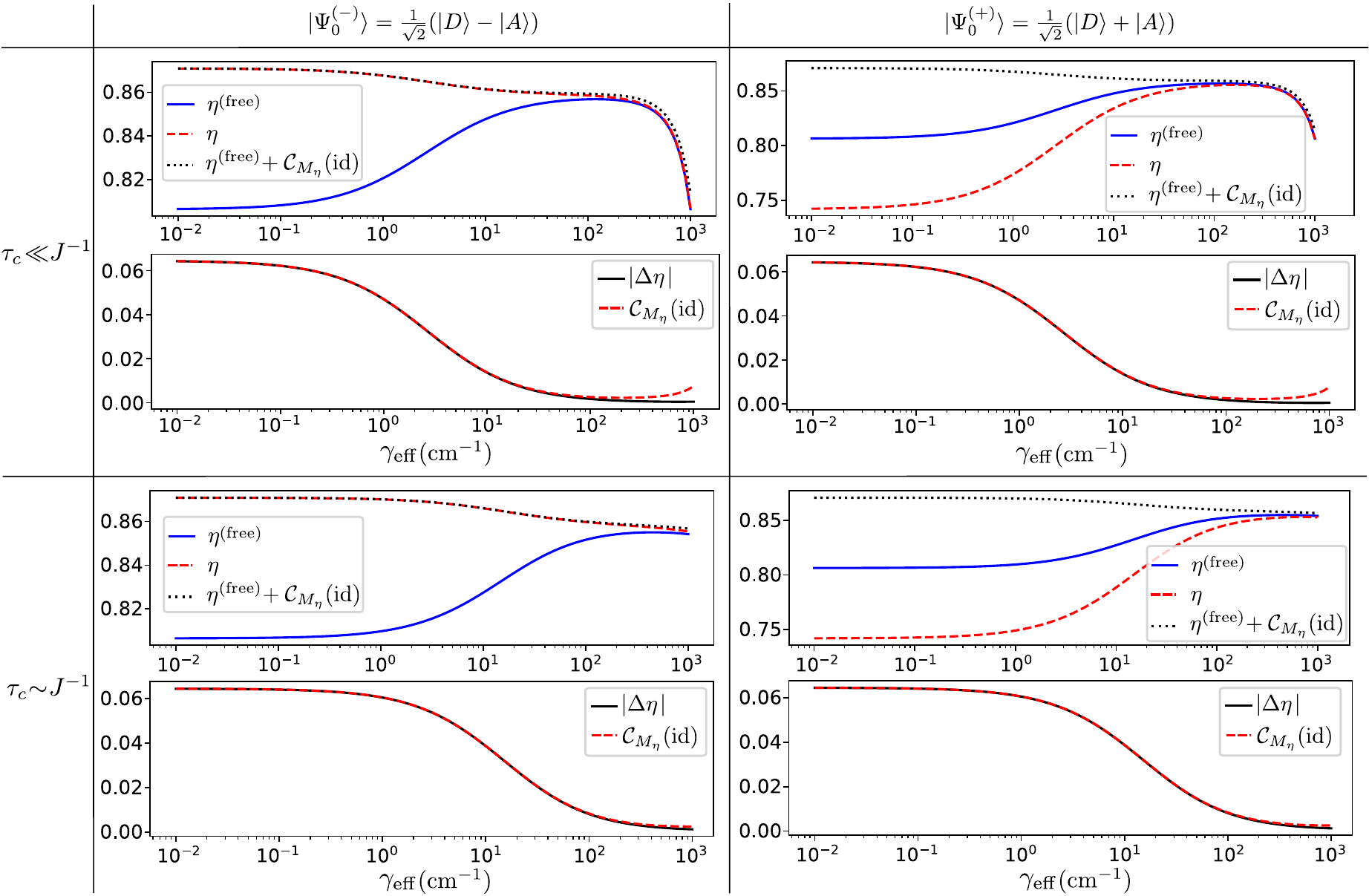}
\caption{Bounds on the coherence-induced change in trapping efficiency for the donor-acceptor dimer. 
For each initial state $\ket{\Psi_0^{(\pm)}}=\tfrac{1}{\sqrt{2}}(\ket{D}\pm\ket{A})$ and time scale, the upper panels show the efficiency $\eta$, its dephased baseline $\eta^{(\mathrm{free})}=\eta(\mathcal{G}(\rho_0))$, and the bound $\eta^{(\mathrm{free})}+\mathcal{C}_{M_\eta}(\mathrm{id})$. The lower panels illustrates Eq.~\eqref{eq:DAM-bound-eta-heom}. 
The system parameters are $J=\Delta=100\,\mathrm{cm}^{-1}$, $T=300\,\mathrm{K}$, asymmetric trapping $\kappa_A=1\,\mathrm{ps}^{-1}$, and recombination $\Gamma_A=\Gamma_D=0.1\,\mathrm{ps}^{-1}$. 
Results are shown for a fast bath ($\tau_c=2.65\,\mathrm{fs}$, $\tau_c\ll J^{-1}$; top row) and an intermediate bath ($\tau_c=53.1\,\mathrm{fs}$, $\tau_c\sim J^{-1}$; bottom row). The horizontal axis is the effective noise scale $\gamma_{\mathrm{eff}}$ obtained from the fast-bath/high-$T$ mapping. In the intermediate-bath regime it is used only as a monotonic parametrization of $E_R$ at fixed $\tau_c$ and $T$ (see text).\label{fig:DAM-efficiency}}
\end{figure*}

\subsection{Bounds on efficiency and average transfer time\label{sec:efficiency-DAM}}

In the following, we connect the resource impact functional $\mathcal{C}_M(\Lambda)$ to standard figures of merit for excitation energy transfer, namely the trapping efficiency and the average energy transfer time (e.g., see Ref.~\cite{RMKLAG09}). 
Throughout, we consider the reduced dynamics on the donor-acceptor manifold (in the single-excitation subspace) and describe environmental decoherence using HEOM as in Sec.~\ref{sec:DAM-int}. 

The trapping efficiency for an initial state $\rho_0$ is given by 
\begin{equation}\label{eq:efficiency-heom}
\eta(\rho_0)=\int_0^\infty\! \mathrm dt\, \Tr[\widetilde M\,\rho(t)]=\Tr[M_\eta\,\rho_0]\,,
\end{equation}
where $\widetilde M:=\sum_{j=D,A}\kappa_j\ket{j}\!\bra{j}$ and we have the effective measurement operator
\begin{equation}\label{eq:M-eta-heom}
M_\eta:=\int_0^\infty\!\mathrm dt\,\Lambda_t^\dagger(\widetilde M)\,.
\end{equation}
For a stable Lindblad semigroup $\Lambda_t=\exp(t\mathcal L)$ one has $M_\eta =(-\mathcal L^\dagger)^{-1}(\widetilde M)$ on the subspace where $\mathcal{L}^\dagger$ is invertible.
In contrast, under HEOM the reduced dynamics is in general not a semigroup and need not admit a time-independent Lindblad generator on the donor-acceptor manifold \cite{TK89, IF09-JCP, IS20}. 
Nevertheless, $M_\eta$ can be evaluated without time integration by working in the (time-local) extended HEOM space by writing the HEOM equations as $\mathrm{d}\mathbf R_t/\mathrm{d}t=\mathcal L_{\mathrm{heom}}\,\mathbf R_t$ for the hierarchy vector $\mathbf R_t$ (see Sec \ref{sec:heom-CM} for more details), such that the Lindblad generator $\mathcal{L}$ is effectively replaced by $\mathcal{L}_{\text{heom}}$. 

Similarly, the unnormalised first moment of the trapping-time distribution is
\begin{equation}
N_\tau(\rho_0):=\int_0^\infty\!\mathrm dt\, t\,\Tr[\widetilde M\,\rho(t)]=\Tr[M_\tau\,\rho_0]\,,
\end{equation}
with effective measurement operator
\begin{equation}\label{eq:M-tau-heom}
M_\tau:=\int_0^\infty\!\mathrm dt\,t\,\Lambda_t^\dagger(\widetilde M)\,.
\end{equation}
In the Lindblad semigroup case $M_\tau = (-\mathcal L^\dagger)^{-2}(\widetilde M)$, whereas in HEOM $M_\tau$ is again obtained from a linear solve in the extended HEOM space. 
The average transfer time is then defined as \footnote{Ref.~\cite{RMKLAG09} formulates trapping and recombination via anti-Hermitian contributions to an effective Hamiltonian. In a Lindblad formulation the corresponding rates differ by a conventional factor of two, which rescales $\eta$ but cancels in $\tau=N_\tau/\eta$.}
\begin{equation}
\tau(\rho_0):=\frac{N_\tau(\rho_0)}{\eta(\rho_0)}\,.
\end{equation}
To quantify the impact of initial coherence, we introduce free baselines by applying the resource-destroying dephasing map $\mathcal{G}$ to the initial state,
\begin{equation}\label{eq:efficiency-free-heom}
\eta^{(\mathrm{free})}(\rho_0):=\eta(\mathcal G(\rho_0))\,,\quad
\tau^{(\mathrm{free})}(\rho_0):=\tau(\mathcal G(\rho_0))\,,
\end{equation}
and set $\Delta\eta:=\eta-\eta^{(\mathrm{free})}$ and $\Delta\tau:=\tau-\tau^{(\mathrm{free})}$. 
Since $\Delta\eta(\rho_0)=\Tr[M_\eta(\rho_0-\mathcal G(\rho_0))]$, the state-specific change in efficiency is bounded by the resource impact functional,
\begin{equation}\label{eq:DAM-bound-eta-heom}
|\Delta\eta(\rho_0)|\leq \mathcal C_{M_\eta}(\mathrm{id})\,,
\end{equation}
and analogously 
\begin{equation}\label{eq:DAM-bound-N}
|\Delta N_\tau(\rho_0)|\le \mathcal C_{M_\tau}(\mathrm{id})\,.
\end{equation}
Moreover, whenever the time-integrated map $\Lambda^\prime := \int_{0}^{\infty} \mathrm{d}t\, \Lambda_t$ is well defined, the pull-back identity $\mathcal{C}_M(\Lambda_2\circ\Lambda_1) = \mathcal{C}_{\Lambda_2^\dagger(M)}(\Lambda_1)$ \cite{LS26-RT} applied with $\Lambda_1=\mathrm{id}$ and $\Lambda_2=\Lambda^\prime$ yields
\begin{equation}\label{eq:DAM-pullback}
\mathcal{C}_{M_\eta}(\mathrm{id}) = \mathcal{C}_{\widetilde{M}}(\Lambda^\prime)\,,
\end{equation}
i.e., $\mathcal{C}_{M\eta}(\mathrm{id})$ can be viewed as the impact functional of $\widetilde{M}$ under the integrated dynamics $\Lambda^\prime$. 

If the free baseline efficiency satisfies $\eta^{(\mathrm{free})}(\rho_0)>\mathcal{C}_{M_\eta}(\mathrm{id})$, then the ratio is well-conditioned and the coherence-induced change in transfer time satisfies
\begin{equation}\label{eq:Dtau-bound-improved-heom}
|\Delta\tau(\rho_0)|\leq \frac{\mathcal{C}_{M_\tau}(\mathrm{id})+\tau^{(\mathrm{free})}(\rho_0)\mathcal{C}_{M_\eta}(\mathrm{id})}{\eta^{(\mathrm{free})}(\rho_0)-\mathcal{C}_{M_\eta}(\mathrm{id})}\,.
\end{equation}

We illustrate the efficiency $\eta$ \eqref{eq:efficiency-heom}, its coherence-free counterpart $\eta^{(\mathrm{free})}$ \eqref{eq:efficiency-free-heom}, and the corresponding upper bound given by the impact functional $\mathcal{C}_{M_\eta}(\mathrm{id})$ in terms of the effective measurement operator $M_\eta$ in Fig.~\ref{fig:DAM-efficiency}.
The environmental noise strength is parametrised by an effective scale $\gamma_{\mathrm{eff}}$ obtained from the fast-bath/high-$T$ mapping of a Drude-Lorentz bath (with $\omega_c=\tau_c^{-1}$), for which $\gamma_{\mathrm{eff}}\approx 4 E_R k_B T/\omega_c$. 
While this identification as a Markovian dephasing rate is controlled only for $\tau_c\ll J^{-1}$, we use $\gamma_{\mathrm{eff}}$ in the intermediate regime $\tau_c\sim J^{-1}$ as a monotonic parametrisation of $E_R$ at fixed $\tau_c$ and $T$.
The parameters are $J=\Delta=100\,\mathrm{cm}^{-1}$, $T=300\,\mathrm{K}$, asymmetric trapping $\kappa_A=1\,\mathrm{ps}^{-1}$, and recombination $\Gamma_A=\Gamma_D=0.1\,\mathrm{ps}^{-1}$. 
We consider two coherent initial states $\rho_0^{(\pm)}=\ket{\Psi_0^{(\pm)}}\!\bra{\Psi_0^{(\pm)}}$ with $\ket{\Psi_0^{(\pm)}}=\tfrac{1}{\sqrt{2}}(\ket{D}\pm\ket{A})$, which probe constructive and destructive interference in transport. 
For each bath timescale (fast bath with $\tau_c=2.65\,\mathrm{fs}$, and intermediate bath with $\tau_c=53.1\,\mathrm{fs}$), the upper panels show $\eta(\rho_0^{(\pm)})$, its baseline $\eta^{(\mathrm{free})}(\rho_0^{(\pm)})$, and the upper bound $\eta^{(\mathrm{free})}(\rho_0^{(\pm)})+\mathcal{C}_{M_\eta}(\mathrm{id})$, whereas the lower panels show $|\Delta\eta(\rho_0^{(\pm)})|$ together with $\mathcal{C}_{M_\eta}(\mathrm{id})$, indicating that the bound is reasonably tight for this preparation and parameter range.

\begin{figure*}[htb]
\centering
\includegraphics[width=\linewidth]{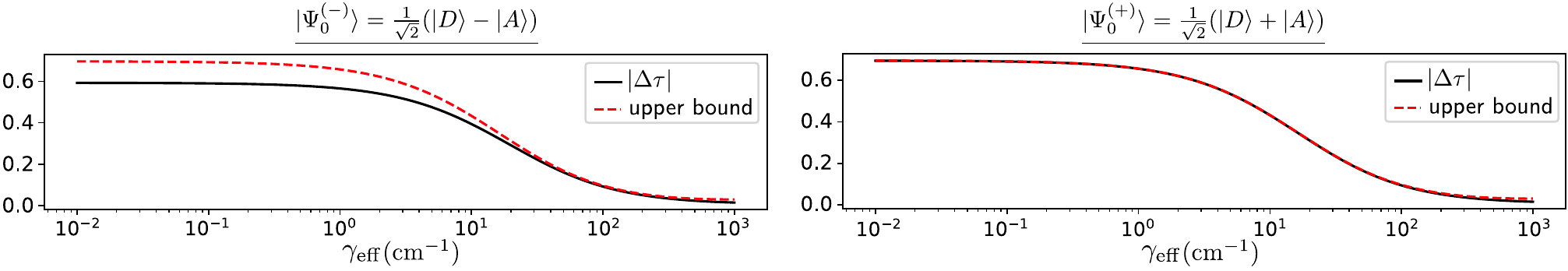}
\caption{Illustration of the state dependence of the upper bound on the coherence induced change $|\Delta\tau|$ in the average energy transfer time in Eq.~\eqref{eq:Dtau-bound-improved-heom} as a function of $\gamma_\mathrm{eff}$ for the donor-acceptor model with $J = \Delta = 100\,\text{cm}^{-1}$, $T= 300\,\mathrm{K}$, $\kappa_A= 1\,\text{ps}^{-1}$, and $\Gamma_A=\Gamma_D = 0.1\,\text{ps}^{-1}$, and the two initial states $\ket{\Psi_0^{(\pm)}} = \tfrac{1}{\sqrt{2}}(\ket{D}\pm\ket{A})$. \label{fig:DAM-tau-bound}}
\end{figure*} 

The coherence-free efficiency $\eta^{(\mathrm{free})}$ exhibits the characteristic environment-assisted quantum transport (ENAQT) behaviour \cite{RMKLAG09}: as the effective noise strength is increased from weak coupling, moderate environmental fluctuations can enhance trapping, whereas sufficiently strong noise suppresses transport again. 
For coherent initial preparations $\rho_0^{(\pm)}$, the efficiencies $\eta(\rho_0^{(+)})$ and $\eta(\rho_0^{(-)})$ differ because the relative phase of the initial superposition can either support or hinder population transfer toward the trap, i.e., dephasing can therefore either reduce a beneficial interference pattern or wash out an initially disadvantageous one.
The new element introduced here is that such preparation-dependent coherence effects are quantified and constrained by the resource impact functional, which measures the sensitivity of a chosen readout to initial-state coherence under the given dynamics (recall $\mathcal{C}_{M_\eta}(\mathrm{id})=\mathcal{C}_{\widetilde{M}}(\Lambda^\prime)$ \eqref{eq:DAM-pullback}).

It is important to note that the tightness of the bounds in Eqs.~\eqref{eq:DAM-bound-eta-heom}, \eqref{eq:DAM-bound-N} and \eqref{eq:Dtau-bound-improved-heom} depends both on the system parameters and on the chosen initial state, i.e., on how the coherence present in $\rho_0$ aligns (in magnitude and phase) with the optimal directions selected by the effective observables $M_\eta$ and $M_\tau$. The prescribed dynamics $\Lambda_t$ enters these bounds only through these effective operators (and, for the transfer-time bound, also through the free $\eta^{(\mathrm{free})}(\rho_0)$ and $\tau^{(\mathrm{free})}(\rho_0)$). 

As an illustration of this preparation dependence, Fig.~\ref{fig:DAM-tau-bound} compares $|\Delta\tau(\rho_0)|$ with the corresponding upper bound in Eq.~\eqref{eq:Dtau-bound-improved-heom} for the same two initial states and system parameters as in the lower row of Fig.~\ref{fig:DAM-efficiency}. While in Fig.~\ref{fig:DAM-efficiency}, the two preparations do not influence the tightness of the bound on $|\Delta\eta|$ in Eq.~\eqref{eq:DAM-bound-eta-heom}, the transfer time bound on $|\Delta\tau|$, Eq.~\eqref{eq:Dtau-bound-improved-heom} is looser for $\ket{\Psi_0^{(-)}}$ whereas it remains comparatively tight over a broad range of $\gamma_\varphi$ for $\ket{\Psi_0^{(+)}}$. At the same time, Figs.~\ref{fig:DAM-efficiency} and \ref{fig:DAM-tau-bound} show that the corresponding upper bounds track the qualitative dependence of $|\Delta\eta|$ and $|\Delta\tau|$ on $\gamma_\varphi$, and can thus serve as practical indicators for when initial state coherence can have an operational impact on $\eta$ and $\tau$.

\subsection{Finite gate times and coarse-graining of measurement operators\label{sec:DAM-CGI}}

In practice, detections are performed over a finite time gate, implying that the relevant observable is not an instantaneous projector but the relevant measurement effect (or POVM) is time-integrated over the detector's acquisition window. In fluorescence or time-correlated single-photon counting on molecular dimers, for instance, the electronics define a finite time window during which emission events are registered. In the donor-acceptor model, this is incorporated directly at the POVM level by replacing the ideal projectors with effective measurement operators obtained by integrating the time-dependent detection POVM over the gate. Because this procedure averages the dynamics within the gate, it cannot be more informative than an idealized projective readout. Thus, this provides a setting to quantify how experimental coarse-graining limits any observable resource-induced advantage.

As a reference point, we consider the projective POVM
\begin{equation}
\{M_A,M_D,M_0\}=\left\{\ket{A}\!\bra{A},\,\ket{D}\!\bra{D},\,\mathbbm{1}-\ket{A}\!\bra{A}-\ket{D}\!\bra{D}\right\},
\end{equation}
where $M_A$ and $M_D$ correspond to population on the acceptor and donor sites, and $M_0=\ket{g}\!\bra{g}$ to the overall ground state. In the following, we will replace this measurement at sharp times by a time-gated photon-counting model in which the system first evolves under the donor-acceptor dynamics and is then measured with a gate-dependent POVM element.

Concretely, let $\rho_0$ be the initial state at $t=0$ and let $\{\Lambda_t\}_{t\ge 0}$ denote the (pre-detection) dynamical map of the donor-acceptor model. We consider a detection gate of length $\Delta t$ that starts at time $t\geq 0$, i.e., the detector integrates click events in the interval $[t,t+\Delta t]$. The state at the beginning of the gate is therefore
\begin{equation}
\rho(t)=\Lambda_t(\rho_0)\,.
\end{equation}
Photon detection in channel $j\in\{A,D\}$ is modelled by a jump operator $L_j$ (e.g., $L_A=\sqrt{\Gamma_A}\ket{g}\!\bra{A}$ and $L_D=\sqrt{\Gamma_D}\ket{g}\!\bra{D}$). The corresponding instantaneous click rate is
\begin{equation}
p_j(t')\,\mathrm{d}t'=\Tr \left[L_j^\dagger L_j\,\rho(t') \right]\,\mathrm{d}t'\,.
\end{equation}
We additionally allow for a possibly time-dependent detection efficiency within the gate, described by weights $0\le w_j(\tau)\le 1$ for $\tau\in[0,\Delta t]$. The total probability to register a click in channel $j$ during the gate is then
\begin{align}
p_j(t,\Delta t) &=\int_0^{\Delta t}\!\mathrm{d}\tau\, w_j(\tau)\,\Tr \bigl[L_j^\dagger L_j\,\rho(t+\tau)\bigr]\nonumber\\
&=\int_0^{\Delta t}\!\mathrm{d}\tau\, w_j(\tau)\,\Tr \bigl[L_j^\dagger L_j\,\Lambda_{t+\tau}(\rho_0)\bigr]\,.
\end{align}
In the Markovian Lindblad setting $\Lambda_{t+\tau}=\Lambda_\tau\circ\Lambda_t$, such that $\rho(t+\tau)=\Lambda_\tau(\rho(t))$ and the gate probability can be written as a measurement acting after the evolution to the gate start time:
\begin{equation}
p_j(t,\Delta t)=\Tr \left[\widetilde{M}_j(\Delta t) \rho(t)\right]\,,
\end{equation}
with the effective time-integrated POVM element
\begin{equation}\label{eq:Mtilde-gate}
\widetilde{M}_j(\Delta t) :=\int_0^{\Delta t}\!\mathrm{d}\tau\, \Lambda_\tau^\dagger \left(L_j^\dagger L_j\right)  w_j(\tau)  \geq 0\,.
\end{equation}

In the single-excitation donor-acceptor subspace without re-excitation, at most one recombination photon is emitted, so the integrated click rate equals the click probability and the gate effects define a valid three-outcome POVM. In particular, since $\sum_{j\in\{A,D\}}p_j(t,\Delta t)\leq 1$ for all input states, one has
\begin{equation}
\sum_{j\in\{A,D\}}\widetilde{M}_j(\Delta t)\leq \mathbbm{1}\,.
\end{equation}
In combination with the third POVM element $\widetilde{M}_0(\Delta t):=\mathbbm{1}-\widetilde{M}_A(\Delta t)-\widetilde{M}_D(\Delta t)$ this yields the gate-dependent POVM $\{\widetilde{M}_A(\Delta t),\widetilde{M}_D(\Delta t),\widetilde{M}_0(\Delta t)\}$.

In this framework, the ideal sharp-time projectors $M_j$ in the definition of the resource impact functional are replaced by the gate-dependent effects $\widetilde{M}_j(\Delta t)$. The quantity $\mathcal{C}_{\widetilde{M}_j(\Delta t)}(\Lambda_t)$ therefore quantifies the maximal resource-induced change in the integrated detection probability in channel $j$ over the gate $[t,t+\Delta t]$, rather than the population on site $A$ or $D$ at a single time. Since $\widetilde{M}_j(\Delta t)=O(\Delta t)$ as $\Delta t\to 0$, also $\mathcal{C}_{\widetilde{M}_j(\Delta t)}(\Lambda_t)\to 0$ in this limit. For sufficiently small gates it is thus natural to consider the rescaled quantity $\mathcal{C}_{\widetilde{M}_j(\Delta t)}(\Lambda_t)/\Delta t$, which converges to an instantaneous advantage rate determined by $L_j^\dagger L_j$ and $w_j(0)$.

Moreover, depending on the dynamics during the gate and on the efficiency profile $w_j(\tau)$, the maximal resource-induced change quantified by $\mathcal{C}_{\widetilde{M}_j(\Delta t)}(\Lambda_t)$ can be strongly suppressed compared to the idealized sharp-time readout $\mathcal{C}_{M_j}(\Lambda_t)$. This highlights that the operational impact of a resource depends not only on the channel $\Lambda_t$ but also on the specific task (yield) encoded by the measurement operator. For example, if $\Lambda_t$ is the dynamical map from a Lindblad or HEOM model, then decoherence due to system-bath interactions and amplitude damping acting during the gate reduce the off-diagonal contribution to $\widetilde{M}_A(\Delta t)$, and hence can reduce $\mathcal{C}_{\widetilde{M}_A(\Delta t)}(\Lambda_t)$ relative to $\mathcal{C}_{M_A}(\Lambda_t)$.

In addition to this temporal coarse-graining, one may further classically coarse-grain the discrete outcomes via a stochastic matrix $V$, for instance by merging the two exciton channels or modelling classical misassignment between $A$ and $D$. The resulting POVM elements $\widehat{M}_k=\sum_j V_{kj}\widetilde{M}_j(\Delta t)$ satisfy the classical data-processing inequality (see Ref.~\cite{LS26-RT})
\begin{equation}
\mathcal{C}_{\widehat{M}_k}(\Lambda_t)\le \sum_j V_{kj}\,\mathcal{C}_{\widetilde{M}_j(\Delta t)}(\Lambda_t)\,,
\end{equation}
implying that additional classical post-processing can only further reduce the observable resource advantage. Taken together, the gate construction and subsequent classical coarse-graining make explicit how both the system--bath evolution and finite detection resolution constrain the extent to which quantum coherence can be operationally witnessed or exploited in experimentally accessible exciton-transport tasks.

\section{Multi-site donor chain with trapping \label{sec:Kette}}

To move beyond the dimer, we consider an $N$-site linear donor chain with terminal trapping.
In contrast to the two-site case, the optimal initial state for trapping is generally non-trivial: maximal trapping can require a trade-off between initial delocalization, that is site-basis coherence, which promotes spreading, and initial population biased towards the terminal site $N$.

For chain lengths up to $N=50$ we model the reduced dynamics using a time-homogeneous GKSL (Lindblad) master equation. While HEOM provides a nonperturbative description of structured environments, its computational cost grows rapidly with system size and with the hierarchy depth required for convergence, and practical implementations are typically limited to comparatively small systems. 
In the weak electronic coupling or short bath memory regime, a Lindblad model captures the competition between coherent hopping, dephasing, recombination, and irreversible trapping that underlies environment-assisted transport schemes \cite{MRLAG08, PH08, CCDHP09, RMKLAG09, KAG12}. Importantly, the theoretical results derived in Sec.~\ref{sec:theory-chain} depend only on the resulting family of dynamical maps $\Lambda_t$ and therefore apply irrespective of whether $\Lambda_t$ is obtained from Lindblad dynamics or HEOM. 

Moreover, we work in the zero- and single-excitation manifold
\begin{equation}\label{eq:H-chain}
\mathcal{H}=\mathrm{span}_{\CC}\{\ket{g},\ket{s},\ket{1},\dots,\ket{N}\}\,,
\end{equation}
where $\ket{n}$ denotes the state with the excitation localized on site $n$, $\ket{g}$ is the global ground state, and $\ket{s}$ is an absorbing sink state. The coherent dynamics within the donor manifold is generated by a tight-binding Hamiltonian
\begin{equation}
H=\sum_{n=1}^N \varepsilon_n \ket{n}\!\bra{n}+\sum_{n=1}^{N-1}J_n\left(\ket{n}\!\bra{n+1}+\ket{n+1}\!\bra{n}\right)\,.
\end{equation}

\begin{figure}[tb]
\centering
\includegraphics[width=\linewidth]{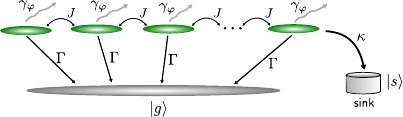}
\caption{Schematic of an $N$-site linear chain in the zero- and single-excitation manifold. An
excitation can be trapped from site $N$ into an absorbing sink $\ket{s}$ at rate $\kappa$, recombine
to the global ground state $\ket{g}$ at rate $\Gamma$, and undergo site-local pure dephasing at rate
$\gamma_{\varphi}$. \label{fig:Kette}}
\end{figure}

Dissipation is modelled by Lindblad jump operators describing:
(i) recombination to the ground state,
\begin{equation}
L^{(\mathrm{rec})}_n=\sqrt{\Gamma}\,\ket{g}\!\bra{n}\,,\quad n=1,\dots,N\,,
\end{equation}
(ii) trapping into the sink from the terminal site,
\begin{equation}
L^{(\mathrm{trap})}=\sqrt{\kappa}\,\ket{s}\!\bra{N}\,,
\end{equation}
and (iii) local pure dephasing in the site basis,
\begin{equation}
L^{(\mathrm{deph})}_n=\sqrt{\gamma_{\varphi}}\,\ket{n}\!\bra{n}\,,\quad n=1,\dots,N\,,
\end{equation}
as illustrated in Fig.~\ref{fig:Kette}. The resulting dynamics defines a CPTP semigroup $\Lambda_t=e^{t\mathcal L}$.
Since we are interested in coherences within the donor manifold and there are no coherent couplings between $\{\ket{n}\}_{n=1}^N$ and $\ket{g},\ket{s}$, we restrict attention to initial preparations supported on the donor subspace,
\begin{equation}\label{eq:D-D}
\mathcal{D}_D:=\left\{\rho\in\mathcal{D}(\mathcal{H})\,\middle\vert\, \rho = \Pi_D \rho\Pi_D\right\}\,,\quad \Pi_D = \sum_{n=1}^N\ket{n}\!\bra{n}\,.
\end{equation}
All optimizations over input states (e.g., in Eq.~\eqref{eq:capacity-sup}) are understood to be
restricted to $\rho\in\mathcal{D}_D$.

As in the previous section, we consider the resource theory of coherence in the site basis. The complete dephasing map reads
$\{\ket{g},\ket{s},\ket{1},\dots,\ket{N}\}$,
\begin{equation}
\mathcal{G}(\rho)=\ket{g}\!\bra{g}\rho\ket{g}\!\bra{g} +\ket{s}\!\bra{s}\rho\ket{s}\!\bra{s} +\sum_{n=1}^N \ket{n}\!\bra{n}\rho\ket{n}\!\bra{n}\,,
\end{equation}
which removes all off-diagonal coherences in the site basis and, when restricted to $\mathcal D_D$,
reduces to complete dephasing on the donor manifold.

\subsection{Time-dependent trapping yield and $\mathcal{C}_M(\Lambda_t)$ \label{sec:CM-chain}}

\begin{figure*}[tb]
\centering
\includegraphics[width=\linewidth]{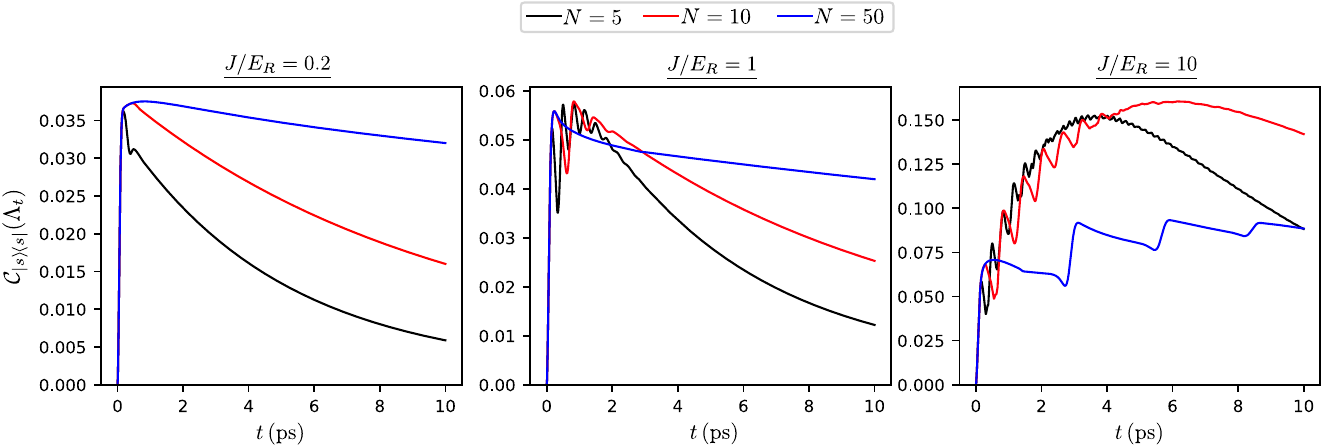}
\caption{Time-resolved resource impact functional $\mathcal{C}_{|s\rangle\!\langle{s}|}(\Lambda_t)$ for a homogeneous one-dimensional $N$-site chain with an absorbing sink attached to site $N$, for $N\in\{5,10,50\}$ and coupling ratios $J/E_R\in\{0.2,1,10\}$. The fixed system parameters are $\varepsilon_n=0$, $J_n=J=100\,\mathrm{cm}^{-1}$, $T=300\,\mathrm{K}$, $\Gamma=0.01\,\mathrm{ps}^{-1}$, $\kappa=1\,\mathrm{ps}^{-1}$, and $\tau_c=0.01\,J^{-1}$ corresponding to a Markovian bath. A value of $C_{|s\rangle\!\langle s|}(\Lambda_t)= 0.1$ means that that initial coherence can modify the yield at time $t$ by at most $10$ percentage points relative to its dephased reference. \label{fig:CM-chain}}
\end{figure*}

For initial states supported on the donor manifold, the cumulative amount of excitation irreversibly transferred to the sink up to time $t$ is quantified by the time-dependent trapping yield
\begin{equation}\label{eq:eta-trap-t}
\eta_{\mathrm{trap}}(t;\rho_0):= \kappa   \int_0^t \mathrm{d}t^\prime\,\bra{N}\rho(t^\prime)\ket{N}   = \Tr\big[M_{\mathrm{trap}}(t)\rho_0\big]
\end{equation}
with the effective (time-dependent) measurement operator
\begin{equation}\label{eq:Mtrap-t}
M_{\mathrm{trap}}(t) := \kappa \int_0^t \mathrm{d}t^\prime\,\Lambda_{t^\prime}^\dagger \big(\ket{N}\!\bra{N}\big)\,.
\end{equation}
For the irreversible sink jump operator $L^{(\mathrm{trap})}=\sqrt{\kappa}\ket{s}\!\bra{N}$ and no
reverse process, this yield coincides with the sink population $\bra{s}\rho_t\ket{s}$. 
The long-time limit defines the trapping efficiency
\begin{align}\label{eq:eta-max-chain}
\eta_{\mathrm{trap}}(\rho_0) &:= \lim_{t\to\infty}\eta_{\mathrm{trap}}(t;\rho_0)= \Tr \big[M_{\mathrm{trap}}\,\rho_0\big]\,,\\
M_{\mathrm{trap}}&:=\lim_{t\to\infty}M_{\mathrm{trap}}(t)\,,
\end{align}
in direct analogy to Eq.~\eqref{eq:efficiency-heom}. Note that in the presence of recombination, $\eta_{\mathrm{trap}}(\rho_0)\le 1$ in general.

To isolate the role of initial site coherence, we compare $\eta_{\mathrm{trap}}(t;\rho_0)$ with its
coherence-free counterpart obtained by dephasing the input state,
\begin{align}
\eta_{\mathrm{trap}}^{(\mathrm{free})}(t;\rho_0)&:=\Tr \big[M_{\mathrm{trap}}(t)\,\mathcal{G}(\rho_0)\big]\,,\\
\Delta\eta_{\mathrm{trap}}(t;\rho_0)&:=\eta_{\mathrm{trap}}(t;\rho_0)-\eta_{\mathrm{trap}}^{(\mathrm{free})}(t;\rho_0)\,,
\end{align}
and analogously at $t\to\infty$.
Since $\mathcal{G}$ is linear and Hermitian, the absolute coherence-induced change is bounded by the resource impact functional according to 
\begin{align}\label{eq:Delta-eta-chain}
\big|\Delta\eta_{\mathrm{trap}}(t;\rho_0)\big| &\leq \mathcal{C}_{M_{\mathrm{trap}}(t)}(\mathrm{id})=\mathcal{C}_{|s\rangle\!\langle s|}(\Lambda_t)\,,\\
\big|\Delta\eta_{\mathrm{trap}}(\rho_0)\big| &\leq \mathcal{C}_{M_{\mathrm{trap}}}(\mathrm{id})  \,.
\end{align}
The equality $\mathcal{C}_{M_{\mathrm{trap}}(t)}(\mathrm{id})=\mathcal{C}_{|s\rangle\!\langle s|}(\Lambda_t)$
is a specific feature of the model considered in this section:
in the Heisenberg picture, $\frac{\mathrm{d}}{\mathrm{d}t}\Lambda_t^\dagger(\ket{s}\!\bra{s}) =\kappa\,\Lambda_t^\dagger(\ket{N}\!\bra{N})$, such that $\Lambda_t^\dagger(\ket{s}\!\bra{s})=\ket{s}\!\bra{s}+M_{\mathrm{trap}}(t)$.

Following Ref.~\cite{RMKLAG09}, a Markovian pure-dephasing rate $\gamma_\varphi$ from an Ohmic spectral density with cutoff $\omega_c$ is used. 
In the Markovian regime the dephasing rate is determined by the zero-frequency slope,
\begin{equation}\label{eq:gamma_phi_ohmic}
\gamma_\varphi(T)=\frac{2\pi k_B T}{\hbar}\left.\frac{ \partial J (\omega)}{\partial\omega}\right|_{\omega=0}\,, 
\end{equation}
and for an Ohmic form with exponential cutoff this yields
$\gamma_\varphi(T)=2\pi (k_B T/\hbar)\,E_R/(\hbar\omega_c)$ \cite{RMKLAG09}.
With $T$ and $\omega_c$ fixed, varying $E_R$ therefore directly tunes the dimensionless noise strength
$\gamma_\varphi/J\propto E_R/J$. 

Figure \ref{fig:CM-chain} shows $\mathcal{C}_{|s \rangle\!\langle s|}(\Lambda_t)$, which provides a state-independent upper bound on the magnitude of any coherence-induced change $|\Delta\eta_{\mathrm{trap}}(t;\rho_0)|$ at time $t$.
Since $M=\ket{s}\!\bra{s}$ is a projector and thus satisfies $\|M\|_\infty=1$, the quantity
$Y_t(\rho_0):=\Tr\big[M \Lambda_t(\rho_0)\big]\in[0,1]$ can be interpreted as the probability that the excitation has been trapped in the sink by time $t$.
By construction, $ |\Delta Y_t(\rho_0)|:=|Y_t(\rho_0) - Y(\mathcal{G}(\rho_0))| \leq \mathcal{C}_{|s\rangle\!\langle s|}(\Lambda_t)$, such that, for example, $\mathcal C_{|s\rangle\!\langle s|}(\Lambda_t)=0.1$ means that initial coherence can modify the yield at time $t$ by at most $10$ percentage points relative to its dephased reference. 
Moreover, the utilization ratio $|\Delta Y_t(\rho_0)|/\mathcal{C}_{|s\rangle\!\langle s|}(\Lambda_t)\leq 1$ quantifies how closely a given preparation approaches this bound (see also Sec.~\ref{sec:DAM-int}).

In the strong-noise regime (small $J/E_R$, i.e., large $\gamma_\varphi/J$), the functional rises at very short times and then decays essentially monotonically: phase sensitivity is confined to an early-time window and becomes negligible at later times as dephasing and loss suppress the influence of initial phase relations. For weaker noise (larger $J/E_R$), $\mathcal{C}_{|s \rangle\!\langle s|}(\Lambda_t)$ exhibits pronounced oscillations at early and intermediate times, reflecting coherent propagation and finite-size recurrences of tight-binding dynamics with trapping at the final site \cite{PH08, ZCBK17}.

In the strong electronic coupling regime (right panel), the smaller early time values and the slower envelope modulations observed for larger $N$ can be understood as finite-size propagation effects. In the Heisenberg picture, the sink readout $\Lambda_t^\dagger(\ket{s}\!\bra{s})$ develops off-diagonal contributions on more distant sites only after the Hamiltonian has had time to couple the terminal site to the rest of the chain. Equivalently, in the Schrödinger picture, population must propagate from an initially localised excitation near site $1$ to the terminal site $N$ before trapping. For a homogeneous tight-binding chain this propagation is approximately ballistic in the near-coherent limit, with a characteristic arrival time scaling as $t_{\mathrm{arr}}\sim (N-1)/J$ (up to an $O(1)$ prefactor), which delays the build-up of phase sensitivity and leads to longer recurrence times as $N$ increases.

\subsection{Relating delocalization and trapping efficiency \label{sec:theory-chain}}

The previous subsection used $\mathcal{C}_{|s\rangle\!\langle s|}(\Lambda_t)$ as a time-resolved, state-independent diagnostic of how sensitive the trapping readout can be to site-basis coherence. We now translate this diagnostic into quantitative performance statements for the trapping task.
The main ingredient is that the trapping yield at time $t$ is a linear functional of the initial donor state, where all dynamical information is contained in the positive effect operator $M_{\mathrm{trap}}(t)$.
Its diagonal part in the site basis determines the best incoherent (site-localized) preparation, whereas the off-diagonal part encodes all phase sensitivity. For complete dephasing $\mathcal{G}$, this off-diagonal part is precisely $(\mathrm{id}-\mathcal{G})M_{\mathrm{trap}}(t)$ and its operator norm equals the resource impact functional $\mathcal{C}_{|s\rangle\!\langle s|}(\Lambda_t)$.
Consequently, $\mathcal{C}_{|s\rangle\!\langle s|}(\Lambda_t)$ sets the fundamental scale for any coherent advantage and allows us to (i) bound the maximal coherent-incoherent performance gap, (ii) quantify the minimum delocalization required to achieve a given improvement, and (iii) identify regimes where the trapping-optimal state must remain close to a localized site state.

\subsubsection{Bounds on optimal coherent and incoherent trapping}

We first define the maximal achievable trapped amount at time $t$, optimized over input states supported on the donor manifold,
\begin{equation}\label{eq:eta-max-t}
\eta_{\mathrm{trap}}^{(\max)}(t) := \max_{\rho\in\mathcal{D}_D}\Tr \big[M_{\mathrm{trap}}(t)\rho\big]\,,
\end{equation}
where $M_{\mathrm{trap}}(t)$ is given in Eq.~\eqref{eq:Mtrap-t} and $\mathcal{D}_D$ in Eq.~\eqref{eq:D-D}. 
Similarly, we define the maximal trapped amount achievable by incoherent preparations,
\begin{equation}\label{eq:eta-incoh-max-t}
\eta_{\mathrm{trap}}^{(\mathrm{incoh})}(t):= \max_{\substack{\rho\in\mathcal{D}_D\\ \rho=\mathcal{G}(\rho)}}\Tr \big[M_{\mathrm{trap}}(t)\rho\big]\,,
\end{equation}
such that by definition $\eta_{\mathrm{trap}}^{(\mathrm{incoh})}(t)\leq \eta_{\mathrm{trap}}^{(\max)}(t)$ for all $t\geq 0$.   

To make the role of coherence explicit, we decompose $M_{\mathrm{trap}}(t)=D(t)+B(t)$ with
\begin{equation}\label{eq:DB-decomp}
D(t) := \mathcal{G} \big(M_{\mathrm{trap}}(t)\big)\,,\quad
B(t) :=  (\mathrm{id}-\mathcal{G}) \big(M_{\mathrm{trap}}(t)\big)\,,
\end{equation}
where $D(t)$ is diagonal in the site basis and $B(t)$ contains the off-diagonal contributions responsible for phase sensitivity of the readout. Moreover, $\|B(t)\|_\infty=\mathcal{C}_{M_{\mathrm{trap}}(t)}(\mathrm{id})$, and for an absorbing sink one has $\mathcal{C}_{M_{\mathrm{trap}}(t)}(\mathrm{id})=\mathcal{C}_{|s\rangle\!\langle{s}|}(\Lambda_t)$ as explained in the previous section.

\begin{thm}[Coherence advantage gap bound for trapping]\label{thm:eta-max-coh}
For each $t\ge 0$, let $M_{\mathrm{trap}}(t)$ be defined as in Eq.~\eqref{eq:Mtrap-t}, and let $\eta_{\mathrm{trap}}^{(\max)}(t)$ and $\eta_{\mathrm{trap}}^{(\mathrm{incoh})}(t)$ be given by Eqs.~\eqref{eq:eta-max-t} and \eqref{eq:eta-incoh-max-t}. Then, 
\begin{equation}\label{eq:eta-gap-bound}
\eta_{\mathrm{trap}}^{(\max)}(t)  -\eta_{\mathrm{trap}}^{(\mathrm{incoh})}(t)   \leq \mathcal{C}_{M_{\mathrm{trap}}(t)}(\mathrm{id})\,.
\end{equation}
\end{thm}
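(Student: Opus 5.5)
The plan is to use the decomposition $M_{\mathrm{trap}}(t)=D(t)+B(t)$ from Eq.~\eqref{eq:DB-decomp} and bound the two contributions to $\Tr[M_{\mathrm{trap}}(t)\rho]$ separately for an optimal (generally coherent) state $\rho^\star\in\mathcal{D}_D$. The diagonal part will be compared with the incoherent optimum. The off-diagonal part will be controlled by the operator norm $\|B(t)\|_\infty=\mathcal{C}_{M_{\mathrm{trap}}(t)}(\mathrm{id})$, which follows from Eq.~\eqref{eq:capacity} with $\Lambda=\mathrm{id}$ and $\mathcal{G}^\dagger=\mathcal{G}$.

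\textbf{Step 1 (existence of a maximizer).} $\mathcal{D}_D$ is compact and $\rho\mapsto\Tr[M_{\mathrm{trap}}(t)\rho]$ is continuous and linear. Hence the maximum in Eq.~\eqref{eq:eta-max-t} is attained at some $\rho^\star\in\mathcal{D}_D$.

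\textbf{Step 2 (diagonal term).} Since $\mathcal{G}$ is self-adjoint,
\[
\Tr[D(t)\rho^\star]=\Tr[\mathcal{G}(M_{\mathrm{trap}}(t))\rho^\star]=\Tr[M_{\mathrm{trap}}(t)\,\mathcal{G}(\rho^\star)].
\]
Next I check that $\mathcal{G}(\rho^\star)$ is admissible in Eq.~\eqref{eq:eta-incoh-max-t}. It is a density operator, it is incoherent because $\mathcal{G}^2=\mathcal{G}$, and it is supported on the donor manifold because $\Pi_D$ is a sum of basis projectors and therefore commutes with the dephasing. It follows that $\Tr[D(t)\rho^\star]\le \eta_{\mathrm{trap}}^{(\mathrm{incoh})}(t)$.

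\textbf{Step 3 (off-diagonal term).} $B(t)$ is Hermitian, so $\Tr[B(t)\rho^\star]\le \|B(t)\|_\infty\,\|\rho^\star\|_1=\|B(t)\|_\infty$. Equivalently, this is the defining supremum in Eq.~\eqref{eq:capacity-sup} evaluated at $\rho^\star$, because
\[
\Tr[B(t)\rho^\star]=\Tr\bigl[M_{\mathrm{trap}}(t)(\rho^\star-\mathcal{G}(\rho^\star))\bigr].
\]
Adding Steps 2 and 3 gives
\[
\eta_{\mathrm{trap}}^{(\max)}(t)\le \eta_{\mathrm{trap}}^{(\mathrm{incoh})}(t)+\mathcal{C}_{M_{\mathrm{trap}}(t)}(\mathrm{id}),
\]
which is Eq.~\eqref{eq:eta-gap-bound}.

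\textbf{Main point to check.} The only step needing care is reconciling the restriction to $\mathcal{D}_D$ with the definition of $\mathcal{C}_M$. The paper specifies that all optimizations in Eq.~\eqref{eq:capacity-sup} are restricted to $\mathcal{D}_D$. Step 3 is unaffected either way, because the unrestricted supremum is at least the restricted one and both dominate the value at $\rho^\star$, so the inequality holds under either reading. The only other point to verify is that $\mathcal{G}$ preserves $\mathcal{D}_D$, which is immediate.
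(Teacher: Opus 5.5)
Your proposal is correct. It reaches Eq.~\eqref{eq:eta-gap-bound} by a state-level argument, whereas the paper uses a spectral one.

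The paper first identifies $\eta_{\mathrm{trap}}^{(\mathrm{incoh})}(t)=\max_n D_{nn}(t)=\lambda_{\max}(D(t))$. Via Rayleigh--Ritz it then identifies $\eta_{\mathrm{trap}}^{(\max)}(t)=\lambda_{\max}(M_{\mathrm{trap}}(t))$, and finally applies Weyl's inequality $\lambda_{\max}(D+B)\le\lambda_{\max}(D)+\|B\|_\infty$. Your Steps 2--3 are essentially the variational proof of that one-sided Weyl bound, carried out at the optimizer $\rho^\star$. The only change is that $\lambda_{\max}(D)$ is replaced by the observation that $\mathcal{G}(\rho^\star)$ is feasible for the incoherent problem.

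\textbf{What your route buys.} It needs no eigenvalue characterizations and not even Eq.~\eqref{eq:capacity}. The following already suffice:
\begin{itemize}
\item the identity $\Tr[M\rho^\star]=\Tr[M\mathcal{G}(\rho^\star)]+\Tr[M(\rho^\star-\mathcal{G}(\rho^\star))]$;
\item idempotence of $\mathcal{G}$ and $\mathcal{G}(\mathcal{D}_D)\subseteq\mathcal{D}_D$;
\item the defining supremum Eq.~\eqref{eq:capacity-sup}.
\end{itemize}
So the bound holds for any resource-destroying map with these properties. It also handles the restriction to $\mathcal{D}_D$ automatically. The appendix instead maximizes over $\mathcal{D}(\mathcal{H})$. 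That is fine here because $M_{\mathrm{trap}}(t)$ has no matrix elements involving $\ket{g}$ or $\ket{s}$, but it is an extra fact that has to be checked.

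\textbf{What the paper's route buys.} It sets up the perturbative spectral picture $M_{\mathrm{trap}}(t)=D(t)+B(t)$, with $\eta_{\mathrm{trap}}^{(\max)}(t)$ as a top eigenvalue. This is exactly the framework reused in Corollary~\ref{cor:2site_lb} and Theorem~\ref{thm:DK-chain}.
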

The proof of Theorem \ref{thm:eta-max-coh} is given in Appendix \ref{app:thm-eta-max-coh}, and is a consequence of $\mathcal{C}_{M_{\mathrm{trap}}(t)}(\mathrm{id}) = \mathcal{C}_{|s\rangle\!\langle s|}(\Lambda_t)$ (see also Corollary 2 in Ref.~\cite{LS26-RT}). 
In particular, Eq.~\eqref{eq:eta-gap-bound} shows that $\mathcal{C}_{M_{\mathrm{trap}}(t)}(\mathrm{id})$ controls the largest possible performance gap between coherent and incoherent initial preparations at time $t$.
In particular, if $\mathcal{C}_{M_{\mathrm{trap}}(t)}(\mathrm{id})$ is small, then delocalization cannot be functionally important for optimizing the trapped amount at that time.
This is complementary to the state-wise bound in Eq.~\eqref{eq:Delta-eta-chain}, which controls $|\Delta\eta_{\mathrm{trap}}(t;\rho_0)|$ for a fixed preparation $\rho_0$ relative to its dephased counterpart $\mathcal{G}(\rho_0)$. In contrast, Eq.~\eqref{eq:eta-gap-bound} compares two separately optimized benchmarks, $\eta_{\mathrm{trap}}^{(\max)}(t)$ and $\eta_{\mathrm{trap}}^{(\mathrm{incoh})}(t)$, and therefore quantifies the maximal performance gap between coherent and incoherent initial preparations at time $t$ without considering a specific pairing $\rho\mapsto\mathcal{G}(\rho)$.

\begin{cor}[Pairwise lower bound and sufficient condition]\label{cor:2site_lb}
Let $t\geq 0$ be fixed and $M_{\mathrm{trap}}(t)=D(t)+B(t)$ as in Eq.~\eqref{eq:DB-decomp}, with $D(t)$
diagonal in the site basis and $B(t)$ purely off-diagonal. For any $n\neq m$ define
\begin{equation}
\lambda_{nm}^{(+)}(t):= \frac{D_{nn}(t)+D_{mm}(t)}{2} +\sqrt{\left(\frac{\Delta_{nm}(t)}{2}\right)^2+|B_{nm}(t)|^2}.
\end{equation}
where $\Delta_{nm}(t):=D_{nn}(t)-D_{mm}(t)$.
Then,
\begin{equation}\label{eq:2site_lb_main}
\eta_{\mathrm{trap}}^{(\max)}(t)\ \ge\ \lambda_{nm}^{(+)}(t).
\end{equation}
Moreover, if $n$ attains the incoherent optimum, i.e., $D_{nn}(t)=\eta_{\mathrm{trap}}^{(\mathrm{incoh})}(t)$,
then
\begin{align}\label{eq:2site_adv_lb}
\eta_{\mathrm{trap}}^{(\max)}(t)-\eta_{\mathrm{trap}}^{(\mathrm{incoh})}(t) &\geq\sqrt{\left(\frac{\Delta_{nm}(t)}{2}\right)^2+|B_{nm}(t)|^2}\nonumber\\
&\quad -\frac{|\Delta_{nm}(t)|}{2}\,,
\end{align}
\end{cor}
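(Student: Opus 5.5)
The plan is to use the variational characterization $\eta_{\mathrm{trap}}^{(\max)}(t)=\max_{\rho\in\mathcal{D}_D}\Tr[M_{\mathrm{trap}}(t)\rho]$. This equals the largest eigenvalue of the compression $\Pi_D M_{\mathrm{trap}}(t)\Pi_D$, since $M_{\mathrm{trap}}(t)$ is Hermitian and the maximum of a linear functional over density operators is attained at a pure state. For the first claim, Eq.~\eqref{eq:2site_lb_main}, it then suffices to exhibit a single pure state in $\mathrm{span}\{\ket{n},\ket{m}\}\subseteq\Pi_D\mathcal{H}$ whose expectation value equals $\lambda_{nm}^{(+)}(t)$.

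Take $\ket{\psi}=\alpha\ket{n}+\beta\ket{m}$ with $|\alpha|^2+|\beta|^2=1$. Since $D(t)$ is diagonal and $B(t)$ is purely off-diagonal and Hermitian, we have
\[
\bra{\psi}M_{\mathrm{trap}}(t)\ket{\psi}=\bra{\psi}P M_{\mathrm{trap}}(t) P\ket{\psi},
\]
where $P=\ket{n}\!\bra{n}+\ket{m}\!\bra{m}$. The compressed operator is the $2\times2$ Hermitian matrix
\[
\begin{pmatrix} D_{nn}(t) & B_{nm}(t)\\ \overline{B_{nm}(t)} & D_{mm}(t)\end{pmatrix}.
\]
Here $B_{mn}=\overline{B_{nm}}$ because $M_{\mathrm{trap}}(t)=\kappa\int_0^t\Lambda_{t'}^\dagger(\ket{N}\!\bra{N})\,\mathrm{d}t'$ is Hermitian, as the dynamics is Hermiticity-preserving. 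Its largest eigenvalue follows from the standard $2\times2$ formula (trace over two plus the square root of the quarter-squared difference of the diagonal entries plus $|B_{nm}|^2$), and this is exactly $\lambda_{nm}^{(+)}(t)$. Choosing $\ket{\psi}$ as the corresponding eigenvector, which is a state in $\mathcal{D}_D$, gives $\eta_{\mathrm{trap}}^{(\max)}(t)\ge\bra{\psi}M_{\mathrm{trap}}(t)\ket{\psi}=\lambda_{nm}^{(+)}(t)$.

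For the second claim, we first record that $\eta_{\mathrm{trap}}^{(\mathrm{incoh})}(t)=\max_k D_{kk}(t)$. This holds because incoherent states in $\mathcal{D}_D$ are convex mixtures of the $\ket{k}\!\bra{k}$, $k=1,\dots,N$, and $\Tr[M_{\mathrm{trap}}(t)\ket{k}\!\bra{k}]=D_{kk}(t)$. By assumption $D_{nn}(t)=\eta_{\mathrm{trap}}^{(\mathrm{incoh})}(t)$, so $D_{nn}\ge D_{mm}$ and hence $\Delta_{nm}=|\Delta_{nm}|$. We then rewrite
\[
\frac{D_{nn}+D_{mm}}{2}=D_{nn}-\frac{\Delta_{nm}}{2}=\eta_{\mathrm{trap}}^{(\mathrm{incoh})}(t)-\frac{|\Delta_{nm}|}{2},
\]
and subtract $\eta_{\mathrm{trap}}^{(\mathrm{incoh})}(t)$ from both sides of Eq.~\eqref{eq:2site_lb_main}, which yields Eq.~\eqref{eq:2site_adv_lb}. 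We may also note that the right-hand side is nonnegative, and strictly positive whenever $B_{nm}(t)\neq0$, which is the sufficient condition for a strict coherent advantage.

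There is no substantial obstacle here. The only points needing care are the Hermiticity of $B(t)$, so that the $2\times 2$ block is Hermitian with off-diagonal modulus $|B_{nm}|$, and the identification of the incoherent optimum as $\max_k D_{kk}(t)$, which is what justifies replacing $\Delta_{nm}$ by $|\Delta_{nm}|$.
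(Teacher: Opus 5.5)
Your proof is correct and follows the paper's argument. Like the paper, you restrict the variational maximization to $\mathrm{span}\{\ket{n},\ket{m}\}$ and read off $\lambda_{nm}^{(+)}(t)$ as the largest eigenvalue of the $2\times 2$ compression $P_{nm}M_{\mathrm{trap}}(t)P_{nm}$; your explicit derivation of Eq.~\eqref{eq:2site_adv_lb} via $\eta_{\mathrm{trap}}^{(\mathrm{incoh})}(t)=\max_k D_{kk}(t)$, so that $\Delta_{nm}(t)=|\Delta_{nm}(t)|$, fills in a step the paper's appendix leaves implicit.
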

The proof of Corollary \ref{cor:2site_lb} follows from standard matrix analysis and is provided in Appendix~\ref{app:cor-2site_lb}. Beyond providing a computable lower bound on $\eta_{\mathrm{trap}}^{(\max)}(t)$, the corollary is practically useful because it yields a pairwise screening criterion for when site-basis coherence can be functionally relevant for trapping at time $t$. The diagonal entries $D_{nn}(t)$ quantify the trapped amount attainable from
incoherent (site-localized) preparations $\ket{n}$, while the off-diagonal entry $B_{nm}(t)$ encodes the phase sensitivity between sites $n$ and $m$. Corollary~\ref{cor:2site_lb} shows that a sizeable coherent advantage requires that there are competing incoherent strategies (i.e., sites with nearly equal values of $D_{nn}(t)$) and simultaneously a non-negligible interference scale $|B_{nm}(t)|$. 

To make this quantitative, assume without loss of generality that $D_{nn}(t)\geq D_{mm}(t)$ and define
$\Delta_{nm}(t):=D_{nn}(t)-D_{mm}(t)\ge 0$ and $b_{nm}(t):=|B_{nm}(t)|$. The advantage lower bound
\eqref{eq:2site_adv_lb} can then be written as $\eta_{\mathrm{trap}}^{(\max)}(t)-\eta_{\mathrm{trap}}^{(\mathrm{incoh})}(t)
\geq f(\Delta_{nm}(t),b_{nm}(t))$ with $f(\Delta,b):=\sqrt{b^2+\frac{\Delta^2}{4}}-\frac{\Delta}{2}$.
In the near-degenerate regime $\Delta\le 2b$, set $z:=\Delta^2/(4b^2)\in[0,1]$. Using the binomial series $(1+z)^{1/2}=1+\tfrac12 z-\tfrac18 z^2+R_3(z)$ with remainder satisfying $0\leq R_3(z)\leq \tfrac1{16}z^3$ for $z\in[0,1]$, we obtain the expansion
\begin{equation}\label{eq:adv_smallgap}
f(\Delta,b)= b-\frac{\Delta}{2}+\frac{\Delta^2}{8b}-\frac{\Delta^4}{128\,b^3}+R_{\mathrm{nd}}(\Delta,b)\,,
\end{equation}
with $0\leq R_{\mathrm{nd}}(\Delta,b)\le \frac{\Delta^6}{1024\,b^5}$.
In particular, for $\Delta\ll b$ the guaranteed advantage scales linearly with the coherence sensitivity $b_{nm}(t)=|B_{nm}(t)|$, up to a small correction set by the diagonal splitting $\Delta_{nm}(t)$.

Conversely, in the well-separated regime $\Delta\ge 2b$, we set $\varepsilon:=4b^2/\Delta^2\in[0,1]$ and expand $\sqrt{1+\varepsilon}$ to obtain
\begin{equation}\label{eq:adv_largegap}
f(\Delta,b)=\frac{b^2}{\Delta}-\frac{b^4}{\Delta^3}+R_{\mathrm{ws}}(\Delta,b)\,,
\end{equation}
where $0\leq R_{\mathrm{ws}}(\Delta,b)\leq \frac{2b^6}{\Delta^5}$. 
Thus, when a single incoherent strategy dominates (large diagonal gap), coherence can improve the readout only perturbatively, with a leading contribution quadratic in $|B_{nm}(t)|$ and suppressed by $1/\Delta_{nm}(t)$.

\subsubsection{Delocalization}
To quantify delocalization, we use the $\ell^1$-coherence of a pure state $\ket{\Psi}=\sum_{n=1}^N \psi_n\ket{n}$ with $\sum_n|\psi_n|^2=1$~\cite{BCP14},
\begin{equation}\label{eq:l1-coherence}
C_{\ell^1}(\ket{\Psi}\!\bra{\Psi}):= \sum_{n\neq m}|\psi_n\psi_m| = \|\Psi\|_1^2-1\,,
\end{equation}
which satisfies $C_{\ell^1}\in [0,N-1]$ (with $C_{\ell^1}=0$ for a site basis state and $C_{\ell^1}=N-1$ for a uniform superposition). Moreover, we define
\begin{equation}\label{eq:cmax}
c_{\max}(t) := \max_{n\neq m}\left|B_{nm}(t)\right| \,,
\end{equation}
with $B(t)$ as defined in Eq.~\eqref{eq:DB-decomp}.
Then for any $\ket{\Psi}$ one has the estimate
\begin{equation}\label{eq:l1-bound}
\left|\bra{\Psi}B(t)\ket{\Psi}\right|\leq c_{\max}(t)\,C_{\ell^1}(\ket{\Psi}\!\bra{\Psi})\,.
\end{equation}
\begin{thm}[Minimal required initial delocalization]\label{thm:min-deloc}
Fix $t\geq 0$. If a pure donor state $\ket{\Psi}$ achieves an improvement of $\delta\geq 0$ over the best incoherent value, i.e.,
\begin{equation}
\bra{\Psi}M_{\mathrm{trap}}(t)\ket{\Psi} \geq \eta_{\mathrm{trap}}^{(\mathrm{incoh})}(t) +\delta\,,
\end{equation}
with $\eta_{\mathrm{trap}}^{(\mathrm{incoh})}(t)$ given by \eqref{eq:eta-incoh-max-t}, then necessarily
\begin{equation}\label{eq:l1-bound-CM}
C_{\ell^1}(\ket{\Psi}\!\bra{\Psi})  \geq \frac{\delta}{c_{\max}(t)} \geq  \frac{\delta}{\mathcal{C}_{M_{\mathrm{trap}}(t)}(\mathrm{id})}\,.
\end{equation}
Moreover, $\ket{\Psi}$ must have support on at least
\begin{equation}\label{eq:m-sites}
m \geq 1+\frac{\delta}{c_{\max}(t)}
\end{equation}
sites.
\end{thm}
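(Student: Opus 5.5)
The plan is to split $\bra{\Psi}M_{\mathrm{trap}}(t)\ket{\Psi}$ using the decomposition $M_{\mathrm{trap}}(t)=D(t)+B(t)$ from Eq.~\eqref{eq:DB-decomp}. We bound the diagonal part by the incoherent optimum and the off-diagonal part by the $\ell^1$ estimate \eqref{eq:l1-bound}.

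\emph{Diagonal part.} Write $\ket{\Psi}=\sum_n\psi_n\ket{n}$. Then
\[
\bra{\Psi}D(t)\ket{\Psi}=\sum_n|\psi_n|^2 D_{nn}(t)=\Tr\big[M_{\mathrm{trap}}(t)\,\mathcal{G}(\ket{\Psi}\!\bra{\Psi})\big],
\]
because $\mathcal{G}$ is self-adjoint. The state $\mathcal{G}(\ket{\Psi}\!\bra{\Psi})$ is incoherent and lies in $\mathcal{D}_D$, so this quantity is at most $\eta_{\mathrm{trap}}^{(\mathrm{incoh})}(t)$. Equivalently, it is a convex combination of the $D_{nn}(t)$, each of which is at most $\eta_{\mathrm{trap}}^{(\mathrm{incoh})}(t)$.

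\emph{Off-diagonal part.} By hypothesis, $\delta\le \bra{\Psi}M_{\mathrm{trap}}(t)\ket{\Psi}-\eta_{\mathrm{trap}}^{(\mathrm{incoh})}(t)\le \bra{\Psi}B(t)\ket{\Psi}$. Next we verify \eqref{eq:l1-bound} directly:
\[
|\bra{\Psi}B(t)\ket{\Psi}|=\Big|\sum_{n\neq m}\bar\psi_n B_{nm}(t)\psi_m\Big|\le c_{\max}(t)\sum_{n\ne m}|\psi_n\psi_m|,
\]
and the last sum is $c_{\max}(t)\,C_{\ell^1}$. Dividing gives $C_{\ell^1}\ge \delta/c_{\max}(t)$, provided $c_{\max}(t)>0$. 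If $c_{\max}(t)=0$, then $B(t)=0$ and the hypothesis forces $\delta=0$, so we read the bound with the convention $0/0=0$ (or restrict to $\delta>0$).

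\emph{Comparison with the impact functional.} For the second inequality we need $c_{\max}(t)\le \|B(t)\|_\infty$. This holds because each matrix entry is bounded by the operator norm: $|B_{nm}|=|\bra{n}B\ket{m}|\le\|B\|_\infty$. Since $\|B(t)\|_\infty=\mathcal{C}_{M_{\mathrm{trap}}(t)}(\mathrm{id})$ by the discussion after Eq.~\eqref{eq:DB-decomp}, this gives the second inequality.

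\emph{Support size.} Suppose $\ket{\Psi}$ is supported on $m$ sites. By Cauchy--Schwarz, $\|\Psi\|_1^2\le m\sum_n|\psi_n|^2=m$, so $C_{\ell^1}=\|\Psi\|_1^2-1\le m-1$. Combining this with the first bound yields $m\ge 1+\delta/c_{\max}(t)$.

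No step is a real obstacle. The only delicate points are two small ones: the degenerate case $c_{\max}(t)=0$, and checking that $\mathcal{G}(\ket{\Psi}\!\bra{\Psi})$ is admissible in the incoherent maximization \eqref{eq:eta-incoh-max-t}. The latter holds because the state is supported on the donor manifold and is diagonal there.
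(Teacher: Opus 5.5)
Your proposal is correct and follows the paper's proof step by step: you bound $\bra{\Psi}D(t)\ket{\Psi}$ by $\max_n D_{nn}(t)$, deduce $\bra{\Psi}B(t)\ket{\Psi}\geq\delta$, apply the entrywise $\ell^1$ estimate, use $c_{\max}(t)\leq\|B(t)\|_\infty$, and get the support size from Cauchy--Schwarz. Your explicit handling of the degenerate case $c_{\max}(t)=0$ is a small addition that the paper leaves implicit.
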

We prove Theorem \ref{thm:min-deloc} in Appendix \ref{app:thm-min-deloc}. 
In particular, Theorem~\ref{thm:min-deloc} provides a necessary condition for achieving an improvement $\delta$ in the trapped amount $\eta_{\mathrm{trap}}(t)$ over the best incoherent preparation, and relates this improvement to $\mathcal{C}_{M_{\mathrm{trap}}(t)}(\mathrm{id})$. To outperform the best incoherent preparation by an amount $\delta$, the input must contain at least $\delta/c_{\max}(t)$ of site-basis coherence (as measured by $C_{\ell^1}$), and therefore has to be delocalized over at least $m\ge 1+\delta/c_{\max}(t)$ sites. In this sense, $c_{\max}(t)$ quantifies the largest ``coherence sensitivity'' of the trapping task at time $t$, i.e., the maximal change in $\eta_{\mathrm{trap}}(t)$ that can be produced per unit of $\ell^1$-coherence in the initial state. Moreover, since $c_{\max}(t)\le \mathcal{C}_{M_{\mathrm{trap}}(t)}(\mathrm{id})$, one obtains the simpler but more conservative condition $C_{\ell^1}\ge \delta/\mathcal{C}_{M_{\mathrm{trap}}(t)}(\mathrm{id})$. In fact, if $c_{\max}(t)=0$ (equivalently $(\mathrm{id}-\mathcal{G})M_{\mathrm{trap}}(t)$ is diagonal), then no coherent preparation can improve $\eta_{\mathrm{trap}}(t)$ over the incoherent optimum at that time.

While the $\ell^1$-coherence is particularly convenient here because it matches the entry-wise structure of the off-diagonal contribution $(\mathrm{id}-\mathcal{G})\!\big(M_{\mathrm{trap}}(t)\big)$, an analogue of Eq.~\eqref{eq:l1-bound-CM} in terms of the relative entropy of coherence $C_{\mathrm{rel}}(\rho)=D\!\big(\rho\|\mathcal{G}(\rho)\big)$ \cite{CG19} is given by
\begin{equation}\label{eq:Crel-CM-bound}
C_{\mathrm{rel}}(\rho)\geq \frac{1}{2\ln 2} \left(\frac{\delta}{\mathcal{C}_{M_{\mathrm{trap}}(t)}(\mathrm{id})}\right)^2\,,
\end{equation}
whenever $\Tr[M_{\mathrm{trap}}(t)\rho]\ge \eta_{\mathrm{trap}}^{(\mathrm{incoh})}(t)+\delta$.
This entropic bound applies to arbitrary (pure or mixed) inputs, whereas the additional support-size statement in Eq.~\eqref{eq:m-sites} derived from the $\ell^1$-bound relies on specializing to pure states via $C_{\ell^1}(|\Psi\rangle\!\langle\Psi|)=\|\Psi\|_1^2-1$.

A complementary notion of delocalization for pure states, widely used in the exciton transport literature, is based only on the site-population distribution $w_n:=|\psi_n|^2$ and therefore quantifies how strongly the wavefunction weight is concentrated on a few sites (independent of relative phases). In contrast, the $\ell^1$-coherence $C_{\ell^1}(\ket{\Psi})$ \eqref{eq:l1-coherence} measures the total magnitude of site-basis coherences.
A common measure is the inverse participation ratio (IPR) of a normalized state $\ket{\Psi}=\sum_{n=1}^N \psi_n\ket{n}$, defined as \cite{Thouless1974,JDBF96,MZCM97,S20}
\begin{equation}
\mathrm{IPR}(\ket{\Psi}) := \sum_{n=1}^N |\psi_n|^4,
\end{equation}
and its inverse, the participation ratio $\mathrm{PR}(\ket{\Psi}) := 1/\mathrm{IPR}(\ket{\Psi})$.
Then, using the Frobenius norm $\|B(t)\|_{\mathrm{F}}:=\sqrt{\sum_{n,m=1}^N |B_{nm}(t)|^2}$ of $B(t)$, we obtain an analogue to Theorem \ref{thm:min-deloc} in terms of the measures IP and IPR.
\begin{cor}[Necessary IPR for a $\delta$-improvement]\label{cor:ipr_pr_CM}
Fix $t\geq 0$ and let $B(t)$ be defined as in Eq.~\eqref{eq:DB-decomp}. If a pure donor state $\ket{\Psi}$ achieves an improvement of $\delta\geq 0$ over the best incoherent value, i.e.,
\begin{equation}
\bra{\Psi}M_{\mathrm{trap}}(t)\ket{\Psi} \geq \eta_{\mathrm{trap}}^{(\mathrm{incoh})}(t) +\delta\,,
\end{equation}
with $\eta_{\mathrm{trap}}^{(\mathrm{incoh})}(t)$ given by \eqref{eq:eta-incoh-max-t}, then necessarily
\begin{equation}\label{eq:ipr_condition_CM}
\mathrm{IPR}(\ket{\Psi}) \leq  1-\frac{\delta^2}{\|B(t)\|_{\mathrm{F}}^2}\leq  1-\frac{\delta^2}{N\,\mathcal{C}_{M_{\mathrm{trap}}(t)}(\mathrm{id})^2}\,.
\end{equation}
\end{cor}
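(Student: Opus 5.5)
The plan is to reduce the improvement $\delta$ to the off-diagonal quadratic form $\bra{\Psi}B(t)\ket{\Psi}$, then bound that form by Cauchy--Schwarz in a way that produces $1-\mathrm{IPR}$.

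\emph{Step 1: isolate the coherent part.} Write $\ket{\Psi}=\sum_{n=1}^N\psi_n\ket{n}$ and $w_n=|\psi_n|^2$. Using $M_{\mathrm{trap}}(t)=D(t)+B(t)$ from Eq.~\eqref{eq:DB-decomp}, the diagonal part gives
\[
\bra{\Psi}D(t)\ket{\Psi}=\sum_n w_n D_{nn}(t)\le \max_n D_{nn}(t)=\eta_{\mathrm{trap}}^{(\mathrm{incoh})}(t).
\]
The last equality holds because incoherent donor states are exactly probability mixtures of the $\ket{n}$, and a linear functional on such mixtures is maximised at a vertex. The hypothesis therefore yields
\[
\delta\le \bra{\Psi}B(t)\ket{\Psi}\le\bigl|\bra{\Psi}B(t)\ket{\Psi}\bigr|.
\]
This is the same reduction that underlies Theorem~\ref{thm:min-deloc}.

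\emph{Step 2: Cauchy--Schwarz with the Frobenius norm.} Since $B(t)$ is purely off-diagonal,
\[
|\bra{\Psi}B(t)\ket{\Psi}|=\Bigl|\sum_{n\neq m}\overline{\psi_n}B_{nm}(t)\psi_m\Bigr|.
\]
I apply Cauchy--Schwarz to the vectors $(B_{nm})_{n\neq m}$ and $(\overline{\psi_n}\psi_m)_{n\neq m}$ indexed by ordered pairs. This gives
\[
|\bra{\Psi}B(t)\ket{\Psi}|\le \|B(t)\|_{\mathrm F}\Bigl(\sum_{n\neq m}w_nw_m\Bigr)^{1/2}.
\]
The key identity is $\sum_{n\neq m}w_nw_m=(\sum_n w_n)^2-\sum_n w_n^2=1-\mathrm{IPR}(\ket{\Psi})$. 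Squaring then gives $\delta^2\le\|B(t)\|_{\mathrm F}^2\,(1-\mathrm{IPR})$, which rearranges to the first inequality in Eq.~\eqref{eq:ipr_condition_CM}. This step needs $\|B(t)\|_{\mathrm F}>0$. If $B(t)=0$, Step 1 forces $\delta=0$, and the statement is read with the convention that the bound is then trivial (or vacuous for $\delta>0$).

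\emph{Step 3: pass to the impact functional.} Only the $N\times N$ donor block of $B(t)$ enters, since $\ket{\Psi}$ lies in the donor subspace. For that block, $\|B\|_{\mathrm F}^2=\sum_i s_i^2$ is a sum of at most $N$ squared singular values. Hence $\|B\|_{\mathrm F}^2\le N\|B\|_\infty^2$. Compressing to the donor block cannot increase the operator norm, so $\|B\|_\infty\le\mathcal{C}_{M_{\mathrm{trap}}(t)}(\mathrm{id})$ by Eq.~\eqref{eq:capacity} (with equality if the donor block carries all of $B$). Substituting into the first bound gives the second inequality.

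The only point requiring care is the bookkeeping in Step 3: $\|B\|_{\mathrm F}$ must be taken on the donor block, consistent with $\mathcal{D}_D$, so that the dimension factor is $N$ rather than $N+2$. The rest is a direct computation.
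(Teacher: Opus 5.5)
Your proposal is correct and follows the paper's proof essentially step for step. The steps are: the reduction $\delta\le\langle\Psi|B(t)|\Psi\rangle$ via $\langle\Psi|D(t)|\Psi\rangle\le\max_n D_{nn}(t)$; Cauchy--Schwarz against the off-diagonal part of $\rho$ (which the paper writes as $|\Tr[B(t)\rho]|\le\|B(t)\|_{\mathrm F}\|(\mathrm{id}-\mathcal G)(\rho)\|_{\mathrm F}$); the identity $\sum_{n\neq m}|\psi_n|^2|\psi_m|^2=1-\mathrm{IPR}$; and $\|B(t)\|_{\mathrm F}\le\sqrt N\,\|B(t)\|_\infty$. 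Your extra remarks on the degenerate case $B(t)=0$ and on working in the $N$-dimensional donor block are harmless refinements, and the latter is consistent because the adjoint generator maps donor-supported operators to donor-supported operators, so $M_{\mathrm{trap}}(t)$ and $B(t)$ live there.
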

The proof of Corollary \ref{cor:ipr_pr_CM} is provided in Appendix \ref{app:cor-ipr_pr_CM}. 

Finally, we show that $\mathcal{C}_{M_{\mathrm{trap}}(t)}(\mathrm{id})$ also controls when the efficiency maximizing state must remain close to a localized site state:
\begin{thm}[Localization of the optimal trapping state]\label{thm:DK-chain}
Fix $t\geq 0$ and decompose $M_{\mathrm{trap}}(t)=D(t)+B(t)$ as in Eq.~\eqref{eq:DB-decomp}. Assume that the largest diagonal entry of $D(t)$ is attained uniquely at site $k$, and define the incoherent gap
\begin{equation}
\Delta_{\mathrm{incoh}}(t):= D_{kk}(t)-\max_{n\neq k}D_{nn}(t) > 0\,.
\end{equation}
Moreover, assume that $\mathcal{C}_{M_{\mathrm{trap}}(t)}(\mathrm{id})<\Delta_{\mathrm{incoh}}(t)$. Then any maximizer of $\eta_{\mathrm{trap}}^{(\max)}(t)$ can be chosen pure, $\rho_{\max}^{(\eta)}=\ket{\Psi_{\max}^{(\eta)}}\!\bra{\Psi_{\max}^{(\eta)}}$, and satisfies
\begin{equation}\label{eq:DK-overlap}
1-\big|\langle k|\Psi_{\max}^{(\eta)}\rangle\big|^2\leq    \left(\frac{\mathcal{C}_{M_{\mathrm{trap}}(t)}(\mathrm{id})}{\Delta_{\mathrm{incoh}}(t)-\mathcal{C}_{M_{\mathrm{trap}}(t)}(\mathrm{id})}\right)^2\,.
\end{equation}
\end{thm}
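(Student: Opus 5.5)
The plan is to reduce everything to the spectral structure of the Hermitian operator $M:=\Pi_D M_{\mathrm{trap}}(t)\Pi_D$ on the donor subspace, and then use a direct eigenvector-perturbation argument of Davis–Kahan type. With $D=D(t)$ and $B=B(t)$ restricted to the donor block, $M=D+B$, and $\|B\|_\infty\le \mathcal{C}_{M_{\mathrm{trap}}(t)}(\mathrm{id})=:C$.

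\emph{Step 1 (purity of the maximizer).} Since $\rho\mapsto\Tr[M\rho]$ is linear on the convex set $\mathcal{D}_D$, its maximum equals $\lambda_{\max}(M)$. Any maximizer is supported on the top eigenspace of $M$, because $\Tr[(\lambda_{\max}-M)\rho]=0$ with $\lambda_{\max}-M\ge 0$ forces this. Any maximizer is therefore a mixture of pure maximizers, namely unit vectors in that eigenspace. So a maximizer can be chosen pure, and it suffices to prove Eq.~\eqref{eq:DK-overlap} for an arbitrary unit eigenvector $\ket{\Psi}$ with $M\ket{\Psi}=\lambda\ket{\Psi}$, $\lambda=\lambda_{\max}$. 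As a side remark, Weyl's inequality gives $\lambda_{\max}\ge D_{kk}-C$ and $\lambda_2\le \max_{n\neq k}D_{nn}+C$. This shows the top eigenvalue is simple when $2C<\Delta_{\mathrm{incoh}}$, but Step 2 does not need simplicity.

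\emph{Step 2 (key lower bound on $\lambda$).} Because $B$ is purely off-diagonal, $B_{kk}=0$. Hence $\lambda\ge\bra{k}M\ket{k}=D_{kk}$. This variational estimate is the crucial point: it gives $\lambda-D_{nn}\ge\Delta_{\mathrm{incoh}}$ for all $n\neq k$.

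\emph{Step 3 (projected eigenvalue equation).} Write $\ket{\Psi}=\alpha\ket{k}+\ket{\phi}$ with $\ket{\phi}\perp\ket{k}$, and let $Q=\Pi_D-\ket{k}\!\bra{k}$. Applying $Q$ to $(M-\lambda)\ket{\Psi}=0$ and using that $D$ is diagonal gives
\begin{equation*}
(\lambda-QDQ)\ket{\phi}=QB\ket{\Psi}.
\end{equation*}
On $\mathrm{ran}\,Q$, Step 2 shows the operator $\lambda-QDQ$ is diagonal with entries at least $\Delta_{\mathrm{incoh}}$. Therefore
\begin{equation*}
\|\phi\|\le \frac{\|B\Psi\|}{\Delta_{\mathrm{incoh}}}\le \frac{C}{\Delta_{\mathrm{incoh}}}.
\end{equation*}
Since $1-|\langle k|\Psi\rangle|^2=\|\phi\|^2$, we get $1-|\langle k|\Psi\rangle|^2\le (C/\Delta_{\mathrm{incoh}})^2$. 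This is at most the right-hand side of Eq.~\eqref{eq:DK-overlap} whenever $0\le C<\Delta_{\mathrm{incoh}}$, which is exactly where the hypothesis $\mathcal{C}_{M_{\mathrm{trap}}(t)}(\mathrm{id})<\Delta_{\mathrm{incoh}}(t)$ enters: it keeps the stated bound finite and meaningful.

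\emph{Main obstacle.} The step needing care is Step 2 together with the separation it produces. A textbook Davis–Kahan $\sin\Theta$ argument only gives $\lambda\ge D_{kk}-C$, which leads to the weaker denominator $\Delta_{\mathrm{incoh}}-C$ in Eq.~\eqref{eq:DK-overlap}. If for some reason the sharper variational bound $\lambda\ge D_{kk}$ were unavailable (for instance if $B$ had nonzero diagonal), I would fall back on that route. It uses $\lambda-D_{nn}\ge \Delta_{\mathrm{incoh}}-C$ and reproduces the stated inequality exactly. A second point to handle is restricting to the donor block consistently. $\ket{g}$ and $\ket{s}$ are excluded from $\mathcal{D}_D$, and compressing by $\Pi_D$ does not increase the operator norm of $B$, so $\|\Pi_D B\Pi_D\|_\infty\le C$ holds.
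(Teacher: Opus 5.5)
Your proposal is correct and follows the same route as the paper's proof: a pure maximizer from linearity of $\rho\mapsto\Tr[M_{\mathrm{trap}}(t)\rho]$, the splitting $\ket{\Psi}=\alpha\ket{k}+\ket{x}$, the projected eigenvalue equation $(\lambda\mathbbm{1}-QDQ)\ket{x}=QB\ket{\Psi}$, and a resolvent bound on $\ket{k}^\perp$. The only difference is your Step 2, where the Rayleigh-quotient bound $\lambda\ge\bra{k}M\ket{k}=D_{kk}$ (valid since $B_{kk}=0$) replaces the paper's Weyl estimate $\lambda\ge D_{kk}-\mathcal{C}_{M_{\mathrm{trap}}(t)}(\mathrm{id})$; this gives the strictly sharper bound $\bigl(\mathcal{C}_{M_{\mathrm{trap}}(t)}(\mathrm{id})/\Delta_{\mathrm{incoh}}(t)\bigr)^2$, which holds even without the hypothesis $\mathcal{C}_{M_{\mathrm{trap}}(t)}(\mathrm{id})<\Delta_{\mathrm{incoh}}(t)$ and implies the stated inequality (with Step 2, your side remark on simplicity of the top eigenvalue also holds under $\mathcal{C}_{M_{\mathrm{trap}}(t)}(\mathrm{id})<\Delta_{\mathrm{incoh}}(t)$, not only under the factor-two condition).
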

The proof of Theorem \ref{thm:DK-chain} is given in Appendix \ref{app:thm-DK-chain}. Theorem \ref{thm:DK-chain} provides a quantitative stability criterion for the structure of the optimal preparation for trapping at time $t$. 
For this, we recall that the diagonal part $D(t)=\mathcal{G}(M_{\mathrm{trap}}(t))$ identifies the best site-localized (incoherent) input via its largest entry $D_{kk}(t)$, while the off-diagonal part $B(t)$ captures the extent to which phase relations in the input can influence the trapped amount. If the incoherent gap $\Delta_{\mathrm{incoh}}(t)$ separating the best site $k$ from all others dominates the off-diagonal scale $\|B(t)\|_\infty=\mathcal{C}_{M_{\mathrm{trap}}(t)}(\mathrm{id})$, then any state maximizing $\eta_{\mathrm{trap}}^{(\max)}(t)$ must be close to the localized state $\ket{k}$ in the sense of Eq.~\eqref{eq:DK-overlap}. 
In particular, when $\mathcal{C}_{M_{\mathrm{trap}}(t)}(\mathrm{id})\ll \Delta_{\mathrm{incoh}}(t)$, the optimal state is forced to be predominantly site-localized and any potential improvement due to delocalization is perturbatively small. 

\begin{figure*}[ht]
\centering
\includegraphics[width=\linewidth]{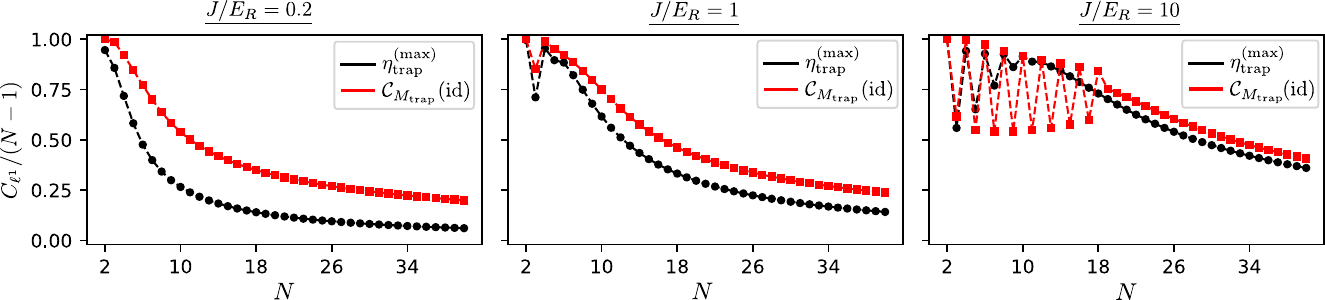}
\caption{Normalized $\ell^1$-coherence of $\ket{\Psi_{\max}^{(\eta)}}$ (black) and $\ket{\Psi_{\max}^{(\mathcal{C})}}$ (red) for the homogeneous one-dimensional chain model as a function of chain length $N$ for weak (left, $J/E_R=0.2$), intermediate (middle $J/E_R=1$), and strong (right, $J/E_R=10$) coupling. The systems parameters are $\varepsilon_n=0$, $J_n=J=100\,\mathrm{cm}^{-1}$, $T=300\,\mathrm{K}$, $\Gamma=0.01\,\mathrm{ps}^{-1}$, $\kappa=1\,\mathrm{ps}^{-1}$, and $\tau_c=0.01\,J^{-1}$. 
\label{fig:l1-coherence-eta-max-CM}}
\end{figure*}

\begin{figure*}[ht]
\centering
\includegraphics[width=\linewidth]{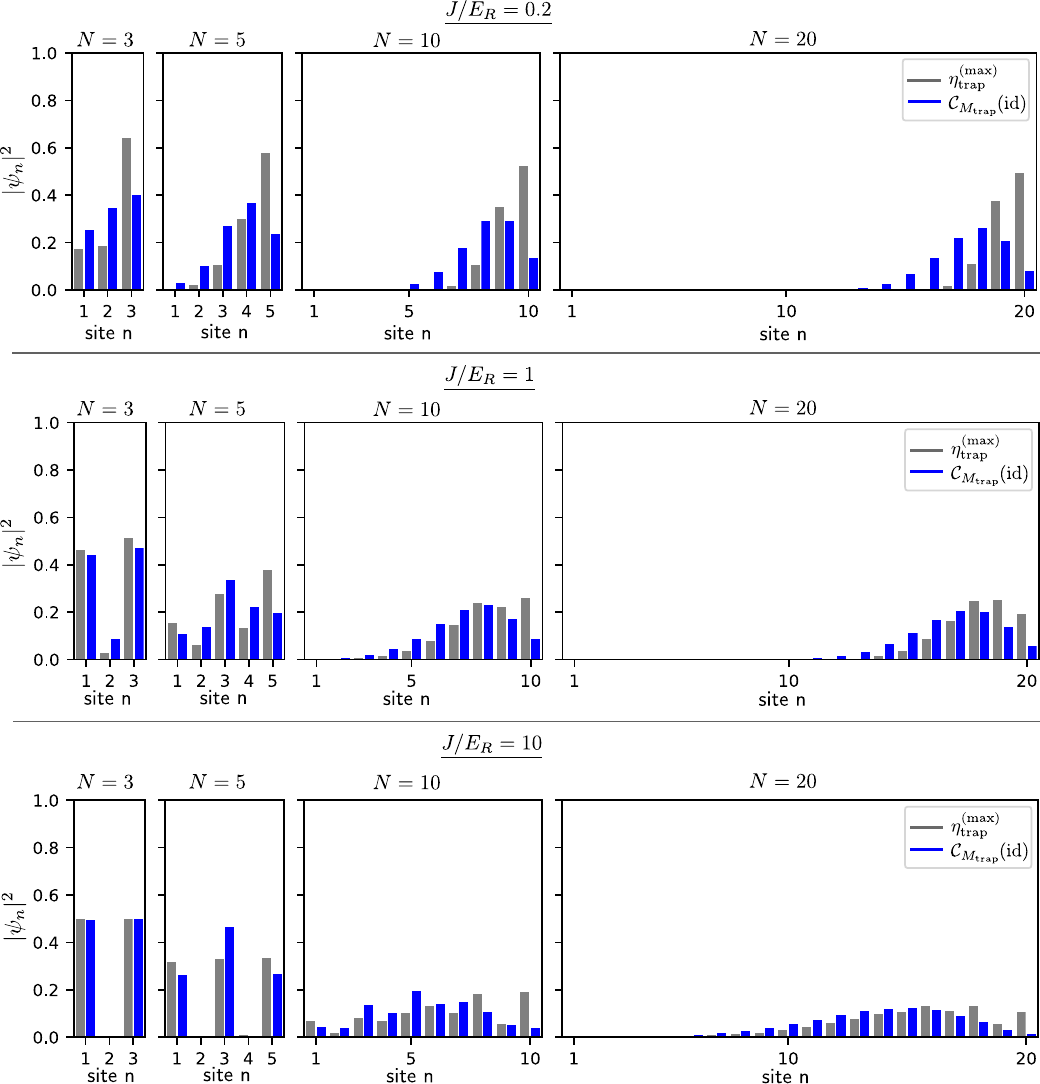}
\caption{Histogram of the weight distributions $|\psi_n|^2$ of the maximizers $\ket{\Psi_{\max}^{(\eta)}}$ of $\eta^{(\max)}_{\mathrm{trap}}$  (gray) and $\ket{\Psi_{\max}^{(\mathcal{C})}}$ of $\mathcal{C}_{M_{\mathrm{trap}}}(\mathrm{id})$ (blue) for the different coupling regimes $J/E_R\in \{0.2, 1, 10\}$ and chain lengths $N\in\{3,5,10,20\}$.
The fixed system parameters are the same as in Fig.~\ref{fig:l1-coherence-eta-max-CM}.
\label{fig:histogram_eta-max}}
\end{figure*}

\subsection{Optimal initial state analysis \label{sec:illustrations-chain}}

To assess when optimal trapping is driven by population placement versus site-basis coherence, we compare in the following the eigenstate that maximizes the trapped amount $\eta^{(\max)}_{\mathrm{trap}}$ \eqref{eq:eta-max-chain} with the eigenstate that maximizes the functional $\mathcal{C}_{M_{\mathrm{trap}}}(\mathrm{id}) = \mathcal{C}_{|s\rangle\!\langle s|}(\Lambda_t)$.

For this, we compare in Figs.~\ref{fig:l1-coherence-eta-max-CM} and~\ref{fig:histogram_eta-max} two distinguished pure donor states associated with the trapping functional at $t\to\infty$:
(i) the maximizer of the trapped amount, $\rho_{\max}^{(\eta)}=\ket{\Psi_{\max}^{(\eta)}}\!\bra{\Psi_{\max}^{(\eta)}}$, which attains $\eta_{\rm trap}^{(\max)}=\|M_{\rm trap}\|_\infty$, and (ii) a state $\rho_{\max}^{(\mathcal{C})}=\ket{\Psi_{\max}^{(\mathcal{C})}}\!\bra{\Psi_{\max}^{(\mathcal{C})}}$ that maximizes the coherence sensitivity of the same task, i.e., attains $\mathcal C_{M_{\rm trap}}(\mathrm{id})=\|(\mathrm{id}-\mathcal G)(M_{\rm trap})\|_\infty$.
Since $\eta_{\rm trap}^{(\max)}$ depends on both the diagonal and off-diagonal parts of $M_{\rm trap}$, whereas $\mathcal C_{M_{\rm trap}}(\mathrm{id})$ depends only on its off-diagonal part, these two optimizers need not coincide.

We quantify the delocalization of each optimizer by the normalized $\ell^1$-coherence ratio $C_{\ell^1}(\ket{\Psi}\!\bra{\Psi})/(N-1)$ (recall Eq.~\eqref{eq:l1-coherence}).
Figure \ref{fig:l1-coherence-eta-max-CM} shows this ratio for $\ket{\Psi_{\max}^{(\eta)}}$ (black) and $\ket{\Psi_{\max}^{(\mathcal{C})}}$ (red) as a function of chain length $N$, for the weak ($J/E_R=0.2$), intermediate ($J/E_R=1$), and strong ($J/E_R=10$) coupling regimes.
Across all regimes, the optimizer of $\eta_{\rm trap}^{(\max)}$ becomes progressively less delocalized as $N$ increases, reflecting the fact that maximizing the final trapped amount favors placing weight closer to the terminal trapping site when recombination and dephasing are present.
In contrast, the optimizer associated with $\mathcal C_{M_{\rm trap}}(\mathrm{id})$ remains on average slightly more delocalized, indicating that the trapping task is, in principle, most sensitive to coherent superpositions that involve multiple sites even when the globally optimal trapped amount can be achieved with a more localized preparation.

To connect these trends to the spatial structure of the optimizers, Fig.~\ref{fig:histogram_eta-max} plots the site populations $|\psi_n|^2$ of $\ket{\Psi_{\max}^{(\eta)}}$ (gray) and $\ket{\Psi_{\max}^{(\mathcal{C})}}$ (blue) for selected $N\in \{3, 5, 10, 50\}$.
For small, disorder-free chains (for instance, $N=3$) both optimizers exhibit a clear parity structure, with suppressed weight on even numbered sites for $N$ odd and small enough, which explains the corresponding dips in the coherence ratio as a function of $N$ as shown in Fig.~\ref{fig:l1-coherence-eta-max-CM}.
For larger $N$, $\ket{\Psi_{\max}^{(\eta)}}$ is increasingly biased towards the end of the chain, whereas $\ket{\Psi_{\max}^{(\mathcal{C})}}$ typically retains a broader support on a larger number of sites.
This difference is expected because $\eta_{\mathrm{trap}}^{(\max)}$ can be increased by redistributing populations (the diagonal part of $M_{\mathrm{trap}}$), while $\mathcal{C}_{M_{\mathrm{trap}}}(\mathrm{id})$ isolates how strongly phase relations at fixed populations can modulate the trapped amount (the off-diagonal part).

Although the two optimizers probe different aspects of the trapping problem, comparing their structure is still informative. The quantity $\mathcal{C}_{M_{\mathrm{trap}}}(\mathrm{id})$ provides a state-independent upper bound on the magnitude of any coherence-induced change in the trapped amount as well as any coherence-induced improvement over the best incoherent preparation (recall Corollary \ref{thm:eta-max-coh}). Fig.~\ref{fig:l1-coherence-eta-max-CM} contrasts the site-basis delocalization of the state that maximizes the trapping objective ($\ket{\Psi_{\max}^{(\eta)}}$) with that of the state that maximizes coherence sensitivity ($\ket{\Psi_{\max}^{(\mathcal{C})}}$). If the two curves are close, the trapping optimum is achieved by a state whose delocalization is comparable to that of the coherence-sensitive optimal state, suggesting that coherence may be operationally relevant already at the level of optimal state design. Conversely, if $\ket{\Psi_{\max}^{(\eta)}}$ is substantially more localized than $\ket{\Psi_{\max}^{(\mathcal{C})}}$, then near-optimal trapping can be obtained primarily through population placement (diagonal contributions), while coherence remains an available but unexploited degree of freedom: the dynamics is in principle sensitive to appropriately structured superpositions (as indicated by $\mathcal{C}_{M_{\mathrm{trap}}}(\mathrm{id})>0$), but exploiting this sensitivity requires preparations that are specifically aligned with the off-diagonal pattern encoded by $(\mathrm{id}-\mathcal{G})(M_{\mathrm{trap}})$.

A natural next step would be to investigate how the optimizing states $\ket{\Psi_{\max}^{(\mathcal{C})}}$ and $\ket{\Psi_{\max}^{(\eta)}}$, and in particular their weight distributions, behave under static disorder. Moreover, beyond single realizations, an ensemble perspective would enable a probabilistic characterization of the optimizers.

\subsection{A coherence light cone from quasi-locality\label{sec:coh-lightcone}}

In extended photosynthetic antennae and other excitonic aggregates, photoexcitation can be created far from the trap, such that transport and trapping involve propagation across many sites  \cite{HVUPJS93, CS15, MOAGGS17, CCCDHKJM20, TNKKKS25}. 
Whether site-basis coherence in the initially excited domain can influence a localized trapping readout is therefore constrained not only by decoherence times, but also by the finite speed at which any phase-sensitive influence can propagate through a locally coupled network. 
Initial coherence may affect the readout either directly, through phase-sensitive amplitudes reaching the sink region, or indirectly by imprinting on populations that subsequently migrate. However, both mechanisms are ultimately limited by quasi-locality. 
Such propagation constraints can be formalized using Lieb-Robinson bounds \cite{LR72, Poulin10, BK12}, which quantify an effective ``light cone'' for the spread of operator support and, hence, of readout sensitivity, under local dynamics.

Let $V$ be the set of sites (optionally including a trap/sink site $s$) with site basis $\{\ket{v}\}_{v\in V}$. For $S\subseteq V$ define $P_S:=\sum_{v\in S}\ket{v}\!\bra{v}$ and
\begin{equation}
\mathcal D(S):=\{\rho\in\mathcal D(\mathcal H)\mid \rho=P_S\rho P_S\}\,.
\end{equation}
Let $\mathcal G$ denote complete dephasing in the site basis, $\mathcal G(\rho)=\sum_{v\in V}\ket{v}\!\bra{v}\rho\ket{v}\!\bra{v}$.
For a Hermitian readout observable $M$ and a (not necessarily semigroup) dynamical map $\Lambda_t$, we define the support-restricted impact functional
\begin{equation}
\mathcal C_M^{(S)}(\Lambda_t) :=\sup_{\rho\in\mathcal D(S)} \left|\Tr \left[M\,\Lambda_t\!\left(\rho-\mathcal G(\rho)\right)\right]\right|\,.
\end{equation}
In analogy to Eq.~\eqref{eq:capacity}, $\mathcal C_M^{(S)}(\Lambda_t)$ can be written as the operator norm of the corresponding block, that is
\begin{equation}\label{eq:C_restricted_opnorm}
\mathcal C_M^{(S)}(\Lambda_t) =\left\|P_S(\mathrm{id}-\mathcal G)\bigl(\Lambda_t^\dagger(M)\bigr)P_S\right\|_\infty\,.
\end{equation}

We equip $V$ with a graph structure, where edges represent non-negligible effective couplings, and write $d(X,Y)$ for the induced graph distance between subsets $X,Y\subseteq V$. Moreover, we assume that the reduced Heisenberg evolution $\Lambda_t^\dagger$ is quasi-local in the following sense: there exist constants $C_{\mathrm{QL}}\geq 0$, $v_{\mathrm{QL}}\geq 0$, and $\mu > 0$ such that for every subset $X \subseteq V$, every operator $O$ supported on $X$ (i.e., $O=P_X O P_X$), and every integer $r\geq 0$, there exists an operator $O^{(r)}(t)$ supported on the $r$-neighbourhood $X[r]:=\{v\in V\,|\,d(v,X)\leq r\}$ with
\begin{equation}\label{eq:quasilocality_assumption}
\left\|\Lambda_t^\dagger(O)-O^{(r)}(t)\right\|_\infty \leq C_{\mathrm{QL}}\,\|O\|_\infty\,\mathrm{e}^{v_{\mathrm{QL}} t-\mu r}\,. 
\end{equation}
Estimates of this form follow from Lieb-Robinson bounds for local Markovian (GKLS) dynamics \cite{Poulin10, BK12}. Beyond the Markovian setting, related bounds have been obtained for certain classes of finite-memory non-Markovian environments \cite{TYR24}. For long-range interactions, generalizations typically involve modified (algebraic) spatial decays, while here we work with the exponential form in Eq.~\eqref{eq:quasilocality_assumption} appropriate for finite-range or sufficiently fast decaying effective interactions \cite{SEK19}. 

\begin{thm}[Coherence light cone for local readouts]\label{thm:lightcone}
Assume \eqref{eq:quasilocality_assumption} holds. Let $M=M^\dagger$ be supported on $X\subseteq V$ and let
$S\subseteq V$ satisfy $d(S,X)\geq 1$. Then for all $t\geq 0$,
\begin{equation}\label{eq:lightcone_bound}
\mathcal{C}_M^{(S)}(\Lambda_t) \leq C_{\mathrm{LC}}\,\|M\|_\infty \exp\left(v_{\mathrm{QL}} t-\mu\, d(S,X)\right)\,,
\end{equation}
with $C_{\mathrm{LC}}:=2C_{\mathrm{QL}}e^{\mu}$.
In particular, for fixed $t$ the upper bound decays exponentially in the distance $d(S,X)-v_{\mathrm{QL}}t/\mu$ whenever $d(S,X) > v_{\mathrm{QL}}t/\mu$.
\end{thm}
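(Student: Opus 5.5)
The plan is to start from the block operator-norm representation \eqref{eq:C_restricted_opnorm}, $\mathcal C_M^{(S)}(\Lambda_t)=\|P_S(\mathrm{id}-\mathcal G)(\Lambda_t^\dagger(M))P_S\|_\infty$, which we take as given. We then insert the quasi-local approximant from \eqref{eq:quasilocality_assumption} at a radius $r$ chosen just small enough that its support misses $S$ entirely. Set $d:=d(S,X)\ge 1$ and choose $r:=d-1\ge 0$, which is an admissible integer. Applying \eqref{eq:quasilocality_assumption} with $O=M$, which is supported on $X$, gives an operator $O^{(r)}(t)$ supported on $X[r]$ with
\[
\|\Lambda_t^\dagger(M)-O^{(r)}(t)\|_\infty\le C_{\mathrm{QL}}\|M\|_\infty e^{v_{\mathrm{QL}}t-\mu(d-1)}.
\]

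The key structural step is to show that $P_S(\mathrm{id}-\mathcal G)(O^{(r)}(t))P_S=0$. First, $X[r]\cap S=\emptyset$: any $v\in S$ has $d(v,X)\ge d(S,X)=d>r$. Hence $P_SP_{X[r]}=0$, and $O^{(r)}=P_{X[r]}O^{(r)}P_{X[r]}$ gives $P_SO^{(r)}P_S=0$. Second, complete dephasing preserves support on $X[r]$, since $\mathcal G(O)=\sum_v\ket{v}\!\bra{v}O\ket{v}\!\bra{v}$ only keeps diagonal entries indexed by $v\in X[r]$. So $P_S\mathcal G(O^{(r)})P_S=0$ as well. Equivalently, $\mathcal G$ commutes with compression by the diagonal projector $P_S$. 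Note that $O^{(r)}(t)$ need not be Hermitian; this does not matter, because everything below only uses operator norms of general operators.

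Writing $E:=\Lambda_t^\dagger(M)-O^{(r)}(t)$, linearity then gives $P_S(\mathrm{id}-\mathcal G)(\Lambda_t^\dagger M)P_S=P_S(\mathrm{id}-\mathcal G)(E)P_S$. Hence
\[
\mathcal C_M^{(S)}(\Lambda_t)\le\|E\|_\infty+\|\mathcal G(E)\|_\infty\le 2\|E\|_\infty .
\]
This uses $\|P_S\|_\infty\le1$ and that $\mathcal G$ is an operator-norm contraction. The contraction holds because $\mathcal G$ is a unital CP map (Russo–Dye). More elementarily, $\mathcal G(E)$ is diagonal with entries $\bra{v}E\ket{v}$, each bounded by $\|E\|_\infty$. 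Combining with the quasi-locality estimate yields
\[
\mathcal C_M^{(S)}(\Lambda_t)\le 2C_{\mathrm{QL}}e^{\mu}\|M\|_\infty e^{v_{\mathrm{QL}}t-\mu d(S,X)},
\]
which is \eqref{eq:lightcone_bound} with $C_{\mathrm{LC}}=2C_{\mathrm{QL}}e^\mu$. The final statement about exponential decay in $d(S,X)-v_{\mathrm{QL}}t/\mu$ is just a rewriting of the exponent.

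The only delicate point I expect is the support bookkeeping. The radius must be taken as $r=d-1$ rather than $r=d$, because a point at distance exactly $d$ from $X$ could lie in $S$ and would spoil disjointness; this off-by-one is precisely the source of the factor $e^{\mu}$. Relatedly, the hypothesis $d(S,X)\ge1$ is needed so that $r\ge 0$ is admissible in \eqref{eq:quasilocality_assumption}. The remaining steps are routine norm estimates.
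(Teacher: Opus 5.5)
Your proof is correct and follows the paper's route: approximate $\Lambda_t^\dagger(M)$ by the quasi-local operator at radius $r=d(S,X)-1$, use that its compression to $S$ vanishes, and pick up the factor $2$ from the dephasing and $e^{\mu}$ from the off-by-one in the radius. You also make explicit a step the paper leaves implicit. The paper combines its bound on $\|P_S\Lambda_t^\dagger(M)P_S\|_\infty$ with the dephased block without saying why this is allowed; the justification is that $\mathcal G$ commutes with compression by $P_S$ and is an operator-norm contraction, which is exactly what you state.
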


\begin{cor}\label{cor:coh_lightcone_chain}
Consider a one-dimensional chain $V=\{1,\dots,N\}\cup\{s\}$ where the sink $s$ is adjacent to site $N$,
and let the readout be $M=\ket{s}\!\bra{s}$ which is supported on $X=\{s\}$.
For an initial domain $S\subseteq\{1,\dots,N\}$ define $d_S:=d(S,\{N\})$. Then $d(S,\{s\})=d_S+1$, and
Theorem~\ref{thm:lightcone} yields, for all $t\geq 0$,
\begin{equation}
\mathcal{C}_{|s\rangle\!\langle s|}^{(S)}(\Lambda_t)\leq C_{\mathrm{LC}}\,\exp\!\big(v_{\mathrm{QL}} t-\mu(d_S+1)\big)\,.
\end{equation}
Equivalently, with $v_{\mathrm{LC}}:=v_{\mathrm{QL}}/\mu$, the influence of any initial site-basis coherence
supported on $S$ on the sink population is exponentially suppressed in $(d_S+1)-v_{\mathrm{LC}}t$.
In particular, for $t<(d_S+1)/v_{\mathrm{LC}}$ this influence is exponentially small in the distance to the sink.
\end{cor}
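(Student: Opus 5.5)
The corollary is a direct specialization of Theorem~\ref{thm:lightcone}, so the plan is to verify its hypotheses for this geometry and compute the distance; nothing dynamical needs to be redone.

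\textbf{Step 1: the graph and the readout.} We take $V=\{1,\dots,N\}\cup\{s\}$ with nearest-neighbour edges $(n,n+1)$ for $n=1,\dots,N-1$. The only edge touching the sink is $(N,s)$. This matches the model of Sec.~\ref{sec:Kette}: the Hamiltonian couples only neighbouring sites, and the trap jump operator $L^{(\mathrm{trap})}$ connects only $\ket{N}$ and $\ket{s}$. The ground state $\ket{g}$ either does not appear in $V$, or appears as an isolated vertex. In the latter case it is harmless, because neither $S$ nor $X$ contains it. We assume the quasi-locality estimate \eqref{eq:quasilocality_assumption} on this graph, as the corollary does implicitly. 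The readout $M=\ket{s}\!\bra{s}$ is Hermitian, satisfies $M=P_XMP_X$ with $X=\{s\}$, and has $\|M\|_\infty=1$.

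\textbf{Step 2: the distance.} Every path from a site $v\in S\subseteq\{1,\dots,N\}$ to $s$ must pass through $N$, because $(N,s)$ is the only edge into $s$. Hence $d(v,s)=d(v,N)+1$, and taking the minimum over $v\in S$ gives $d(S,\{s\})=d_S+1$. Since $d_S\ge 0$, this yields $d(S,X)\ge 1$, so the hypothesis of Theorem~\ref{thm:lightcone} holds.

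\textbf{Step 3: apply the theorem.} Substituting $\|M\|_\infty=1$ and $d(S,X)=d_S+1$ into \eqref{eq:lightcone_bound} gives the displayed bound immediately. Writing $v_{\mathrm{QL}}t-\mu(d_S+1)=-\mu\big[(d_S+1)-v_{\mathrm{LC}}t\big]$ gives the equivalent light-cone form, which decays exponentially in $(d_S+1)-v_{\mathrm{LC}}t$. For the final remark, when $t<(d_S+1)/v_{\mathrm{LC}}$ the exponent equals $-\mu$ times a positive quantity. The bound is then exponentially small in the distance offset, and in particular in $d_S+1$ whenever $t$ is a fixed fraction of the crossing time.

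\textbf{Expected obstacle.} The only delicate point is bookkeeping rather than analysis. One must make sure that the enlarged Hilbert space, which includes $\ket{g}$, does not add shortcut edges to $s$. Recombination sends every site directly to $\ket{g}$, but nothing links $\ket{g}$ to $\ket{s}$. So even if $g$ is included as a vertex adjacent to all sites, it does not shorten paths to $s$ unless the edges are chosen generously. If the quasi-locality assumption were stated on a graph in which $g$ connects to everything, including $s$, the distance computation in Step 2 would fail. The plan is therefore to state explicitly that $g$ is either excluded or not adjacent to $s$.
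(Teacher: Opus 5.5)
Your proposal is correct and follows the paper's route: the corollary is obtained by substituting $X=\{s\}$, $\|M\|_\infty=1$ and $d(S,\{s\})=d_S+1$ into Theorem~\ref{thm:lightcone}, where the distance identity holds because every path into $s$ must end with the edge $(N,s)$, and $d(S,X)\ge 1$ is automatic since $s\notin S$. One small correction to your side remark on $\ket{g}$: a vertex $g$ adjacent to all donor sites \emph{would} shorten paths (e.g.\ $v\to g\to N\to s$ has length $3$), so $d(S,\{s\})=d_S+1$ would still hold formally but the bound would become vacuous; this does not affect your proof, because the corollary's $V$ excludes $g$, and in the chain model the Heisenberg evolution never gives $g$-support to operators that start without it.
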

The proof of Theorem \ref{thm:lightcone} is shown in Appendix \ref{app:thm-lightcone}, and Corollary \ref{cor:coh_lightcone_chain} follows as an application to the chain geometry. 
In pigment protein complexes, functional readouts such as reaction center trapping are typically localized to a small subnetwork, while excitation can prepare states supported on other domains. Theorem~\ref{thm:lightcone} provides a state-independent sufficient condition for negligible coherence influence: if $d(S,X)$ is large compared to $v_{\mathrm{LC}}t$ with the light cone velocity $v_{\mathrm{LC}}:=v_{\mathrm{QL}}/\mu$, then the upper bound on the maximal contribution of site-basis coherence initially confined to $S$ to any readout supported on $X$ is exponentially small in $d(S,X)-v_{\mathrm{LC}}t$. Moreover, the difference between the optimal coherent and incoherent preparation on $S$ at time $t$ is bounded from above by $\mathcal{C}_M^{(S)}(\Lambda_t)$.

This locality-based viewpoint complements the familiar coherence-lifetime argument in excitation energy transfer, which focuses on how rapidly off-diagonal density-matrix elements decay. Lieb-Robinson-type bounds do not quantify how rapidly coherences decay. Instead they constrain how quickly a localized readout can become sensitive to phases initially prepared in a distant domain. 
These bounds are most informative in extended networks, where the distance $d(S,X)$ between the preparation domain $S$ and the readout region $X$ can be large.
Theorem~\ref{thm:lightcone} shows that the maximal coherence-induced change in the readout is exponentially suppressed while $X$ remains outside the light cone of $S$, i.e., whenever $d(S,X)>v_{\mathrm{LC}}t$. For localized preparation and trapping, the theorem should be viewed as a state-independent causality bound on the operational relevance of remote initial coherence. It complements, rather than replaces, model-specific estimates of excitation transfer to the trapping region.
To relate this to the usual timescale picture, it is useful to introduce, in addition to the coherence time $\tau_{\mathrm{coh}}$, a memory time $\tau_{\mathrm{mem}}$ for the coherence-induced perturbation.
Consider the two initial states $\rho_0$ and $\mathcal{G}(\rho_0)$.
We define $\tau_{\mathrm{mem}}$ as the timescale on which $\Lambda_t(\rho_0)$ and $\Lambda_t(\mathcal{G}(\rho_0))$ become indistinguishable within the source region $S$, including any coherence-induced redistribution of populations.
A convenient choice is the decay time of the trace distance between the reduced states on $S$.
If $\tau_{\mathrm{mem}}\ll d(S,X)/v_{\mathrm{LC}}$, up to the logarithmic correction set by the prefactor, then the coherence-induced perturbation is washed out before the readout on $X$ can become sensitive to it and its contribution to localized reaction-center observables is negligible.

Finally, Eq.~\eqref{eq:lightcone_bound} provides a practical truncation criterion.
To compute $\mathcal C_M^{(S)}(\Lambda_t)$ at time $t$ within accuracy $\varepsilon$, it suffices to simulate the reduced dynamics on a neighborhood of $X$ of radius $r$ chosen such that $C_{\mathrm{QL}}\|M\|_\infty e^{v_{\mathrm{QL}}t-\mu r}\lesssim \varepsilon$, i.e., $r$ needs to grow at most linearly with $t$, in direct analogy with standard Lieb-Robinson light cone estimates.

\section{Conclusions and outlook}

We have developed a complementary, process level perspective on coherence in excitation energy transfer based on quantum information tools and the resource impact functional $\mathcal{C}_M(\Lambda_t)$ introduced in Ref.~\cite{LS26-RT}. This functional provides a readout specific and state independent bound on the largest change that initial site-basis coherence can induce in the expectation value of a chosen observable $M$ under a fixed reduced dynamics $\Lambda_t$. In this sense, $\mathcal{C}_M(\Lambda_t)$ turns the qualitative question of ``coherence relevance'' into a quantitative, task-dependent estimate that can be evaluated for both Markovian and non-Markovian models.

In the donor-acceptor setting, we have evaluated $\mathcal{C}_M(\Lambda_t)$ across coupling and bath timescale regimes, including intermediate coupling where perturbative master equations can be quantitatively unreliable. The resulting time-resolved impact functional recovers established qualitative trends: it identifies early time windows of enhanced phase sensitivity and shows how dephasing and loss suppress coherence sensitivity at later times. Importantly, these statements are
state-independent. They therefore complement earlier state-specific analyses and directly address concerns about preparation dependence raised in the spectroscopy and transport literature.

For the $N$-site donor chain with trapping, we derived several rigorous criteria that separate population placement effects from genuine coherence sensitivity. Theorem \ref{thm:eta-max-coh} bounds the largest possible advantage of coherent over incoherent initial preparations for trapping in terms of $\mathcal{C}_{M_{\mathrm{trap}}(t)}(\mathrm{id})$, yielding an operational no-go criterion whenever this quantity is small. 
Corollary~\ref{cor:2site_lb} provides a constructive and easily computable sufficient condition for coherent advantage based on interference between two-site superpositions, even though the globally optimal state may involve many sites.
In addition, Theorem \ref{thm:min-deloc} and Corollary \ref{cor:ipr_pr_CM} quantify the minimal amount of delocalization required to achieve a prescribed improvement over incoherent preparations, expressed via $\ell^1$-coherence and participation measures. 
Finally, Theorem~\ref{thm:DK-chain} gives a stability criterion: when an incoherent strategy is separated by a sufficiently large diagonal gap, the trapping maximizer is forced to remain close to a localized site state. Together with the numerical optimizer analysis, these results clarify how the optimal preparation shifts with chain length and dissipation, and when coherence sensitivity is present but not exploitable without fine-tuned superpositions.

Several extensions are natural and, for realistic molecular aggregates, essential. The homogeneous chains considered here represent an idealized limit. Static disorder and spatially varying couplings are ubiquitous in pigment protein complexes and in synthetic aggregates, and they can change both the optimal preparation and the magnitude of coherent advantages. Extending the present analysis to disordered chains, for example by averaging over disorder realizations in a Monte Carlo approach and deriving the corresponding ensemble-averaged bounds, is therefore an important next step. On the methodological side, a key advantage of $\mathcal{C}_M(\Lambda_t)$ is that many bounds derived here only assume linear reduced dynamics and a fixed resource-destroying map. This makes them portable across simulation methods, including HEOM \cite{TK89, IF09-JCP, Tanimura20, IS20}, polaron based approaches \cite{SH84, JCRE08, Jang11, MCN11, PMCLGN13, TJC19, Jang22, XC16}, and other numerically stable reduced descriptions.

A further outcome of this work is the connection between coherence sensitivity and quasi-locality.
The ``coherence light cone'' of Theorem \ref{thm:lightcone} provides a finite-time locality constraint: it does not quantify coherence decay in the state, but it bounds how quickly phase sensitivity of a localized readout can be generated from coherence initially confined to a distant domain. To make such bounds actionable in specific complexes, one needs estimates of the effective parameters entering the quasi-locality inequality (e.g., velocities and decay rates). Even if these
bounds are conservative, they provide a systematic way to combine geometry with timescale arguments often invoked in the excitation energy transfer literature. They also motivate controlled truncations: for fixed accuracy, only a neighbourhood of the readout region needs to be simulated, with a radius that grows at most linearly in time under quasi-local dynamics.

Finally, the generality of the resource impact functional suggests several routes to connect with current developments on vibrational and vibronic effects in energy transfer \cite{MSNMvG16, AYINF20, HLSAOS21, KNPGCSC21, ZRLZWCWW22}. In many systems, long-lived oscillatory signals arise from vibronic mixing or vibrational coherences rather than purely electronic coherences \cite{CKPM12, TPJ13, WM11, LPCSLP15, JRWS18}. Within the present framework, this can be addressed by enlarging the system Hilbert space to include selected vibrational modes and by choosing resource-destroying maps that target different notions of coherence such as electronic coherence in a site basis, vibrational coherence in a mode basis, or coherence between vibronic eigenstates. This opens the door to a unified, readout-specific comparison of electronic, vibrational, and vibronic contributions to trapping or charge separation observables, and to assessing when such coherences can be operationally relevant within realistic finite time windows and preparation constraints.

\begin{acknowledgments}
This research was funded by the National Science Foundation under Grant No.~CHE-2537080.
\end{acknowledgments}

\appendix

\section{Proof of Theorem \ref{thm:eta-max-coh} \label{app:thm-eta-max-coh}}

\begin{proof}
We first decompose $M_{\mathrm{trap}}(t)$ into its diagonal and off-diagonal parts in the site basis,
\begin{equation}
D(t):=\mathcal{G}\!\big(M_{\mathrm{trap}}(t)\big)\,,\quad
B(t):=(\mathrm{id}-\mathcal{G})\!\big(M_{\mathrm{trap}}(t)\big)
\end{equation}
such that $M_{\mathrm{trap}}(t)=D(t)+B(t)$.
Since $D(t)$ is diagonal, the optimization over incoherent states reduces to an optimization over probability vectors on the diagonal, hence
\begin{equation}\label{eq:eta-incoh-max-diag}
\eta_{\mathrm{trap}}^{(\mathrm{incoh})}(t)
=\max_{n\in\{1,\dots,N\}} D_{nn}(t)
=\lambda_{\max}\!\big(D(t)\big),
\end{equation}
where $D_{nn}(t)=\bra{n}D(t)\ket{n}$ and $\lambda_{\max}$ denotes the largest eigenvalue.
Moreover, by the Rayleigh-Ritz variational principle,
\begin{equation}\label{eq:eta-max-eig}
\eta_{\mathrm{trap}}^{(\max)}(t) =\max_{\rho\in\mathcal{D}(\mathcal{H})}\Tr\big[M_{\mathrm{trap}}(t)\rho\big]
=\lambda_{\max} \big(M_{\mathrm{trap}}(t)\big)\,.
\end{equation}
Then, Weyl's inequality for Hermitian matrices (see, e.g., Ref.~\cite{Bhatia}) yields
\begin{equation}
\lambda_{\max}\!\big(D(t)+B(t)\big)\leq \lambda_{\max}\!\big(D(t)\big)+\|B(t)\|_\infty\,,
\end{equation}
where $\|\cdot\|_\infty$ denotes the operator norm. Combining with Eqs.~\eqref{eq:eta-incoh-max-diag} and~\eqref{eq:eta-max-eig} gives
\begin{equation}
\eta_{\mathrm{trap}}^{(\max)}(t)-\eta_{\mathrm{trap}}^{(\mathrm{incoh})}(t) \leq \|B(t)\|_\infty = \mathcal{C}_{M_{\mathrm{trap}}(t)}(\mathrm{id})\,,
\end{equation}
which proves Eq.~\eqref{eq:eta-gap-bound}.
\end{proof}

\section{Proof of Corollary \ref{cor:2site_lb} \label{app:cor-2site_lb}}
\begin{proof}
Let $P_{nm}:=\ket{n}\!\bra{n}+\ket{m}\!\bra{m}$ be the projector onto $\mathrm{span}\{\ket{n},\ket{m}\}$. Then, 
\begin{align}
\eta_{\mathrm{trap}}^{(\max)}(t) &=\max_{|\Psi\rangle}\bra{\Psi}M_{\mathrm{trap}}(t)\ket{\Psi}\nonumber\\
&\geq\max_{\substack{|\Psi\rangle\in\mathrm{Ran}(P_{nm})\\ \|\Psi\|=1}}\bra{\Psi}M_{\mathrm{trap}}(t)\ket{\Psi}\nonumber\\
&= \lambda_{\max}\!\big(P_{nm}M_{\mathrm{trap}}(t)P_{nm}\big)\,,
\end{align}
where $\lambda_{\max}$ denotes the maximal eigenvalue.
In the basis $\{\ket{n},\ket{m}\}$, the matrix of $P_{nm}M_{\mathrm{trap}}(t)P_{nm}$ is
\begin{equation}
\begin{pmatrix}
D_{nn}(t) & B_{nm}(t)\\
B_{nm}(t)^* & D_{mm}(t)
\end{pmatrix},
\end{equation}
whose largest eigenvalue is the closed-form expression $\lambda_{nm}^{(+)}(t)$, proving
\eqref{eq:2site_lb_main}.
\end{proof}

\section{Proof of Theorem~\ref{thm:min-deloc}\label{app:thm-min-deloc}}

\begin{proof}
We fix $t\ge 0$ and abbreviate $M(t):=M_{\mathrm{trap}}(t)$, which is decomposed as $M(t)=D(t)+B(t)$ with
$D(t):=\mathcal G(M(t))$ and $B(t):=(\mathrm{id}-\mathcal G)(M(t))$.
Let $d_{\max}(t):=\eta_{\mathrm{trap}}^{(\mathrm{incoh})}(t)=\max_n D_{nn}(t)$ and assume that a given normalized pure state $\ket{\Psi} = \sum_n\psi_n\ket{n}$ satisfies
\begin{equation}\label{eq:app-assump}
\bra{\Psi}M(t)\ket{\Psi} \geq d_{\max}(t)+\delta
\end{equation}
for some $\delta\geq 0$. Since $D(t)$ is diagonal and $\sum_n |\psi_n|^2=1$, we have
\begin{equation}
\bra{\Psi}D(t)\ket{\Psi}=\sum_{n} D_{nn}(t)\,|\psi_n|^2 \leq \max_n D_{nn}(t)=d_{\max}(t)\,.
\end{equation}
Subtracting this inequality from \eqref{eq:app-assump} yields
\begin{equation}\label{eq:app-B-lb}
\bra{\Psi}B(t)\ket{\Psi} \geq \delta\,.
\end{equation}
In particular, this implies that $\delta \leq |\bra{\Psi}B(t)\ket{\Psi}|$. Using the definition of $c_{\max}(t)$,
\begin{equation}
c_{\max}(t):=\max_{n\neq m}|B_{nm}(t)|\,,
\end{equation}
we further obtain
\begin{align}
|\bra{\Psi}B(t)\ket{\Psi}| &= \left|\sum_{n\neq m} B_{nm}(t)\,\psi_n^* \psi_m\right|\nonumber\\
&\leq \sum_{n\neq m} |B_{nm}(t)|\,|\psi_n|\,|\psi_m| \nonumber\\
&\leq  c_{\max}(t)\,C_{\ell^1}(\ket{\Psi}\!\bra{\Psi})\,,
\end{align}
where in the last step we used $C_{\ell^1}(\ket{\Psi}\!\bra{\Psi})=\sum_{n\neq m}|\psi_n\psi_m|=\|\Psi\|_1^2-1$.
Combining this with \eqref{eq:app-B-lb} yields
\begin{equation}
C_{\ell^1}(\ket{\Psi}\!\bra{\Psi}) \geq \frac{\delta}{c_{\max}(t)}\,.
\end{equation}
Since $c_{\max}(t)\leq   \|B(t)\|_\infty=\mathcal C_{M_{\mathrm{trap}}(t)}(\mathrm{id})$, it follows that
\begin{equation}\label{eq:l1-app-bound}
C_{\ell^1}(\ket{\Psi}\!\bra{\Psi})\geq \frac{\delta}{c_{\max}(t)} \geq \frac{\delta}{\mathcal C_{M_{\mathrm{trap}}(t)}(\mathrm{id})}\,.
\end{equation}
Finally, if $\ket{\Psi}$ is supported on $m$ sites, then by Cauchy-Schwarz $\|\Psi\|_1 \leq \sqrt{m}\,\|\Psi\|_2=\sqrt{m}$, and hence $\|\Psi\|_1^2\leq m$. This implies that $m$ is lower bounded in terms of $c_\mathrm{max}(t)$ and $\delta$ according to
\begin{equation}
m \geq \|\Psi\|_1^2 = 1 + C_{\ell^1}(\ket{\Psi}\!\bra{\Psi})\geq 1+\frac{\delta}{c_{\max}(t)}\,.
\end{equation}
\end{proof}

\section{Proof of Theorem~\ref{thm:DK-chain}\label{app:thm-DK-chain}}

\begin{proof}
Fix $t\ge 0$ and write $M(t):=M_{\mathrm{trap}}(t)=D(t)+B(t)$ as in Eq.~\eqref{eq:DB-decomp}, where $D(t)$ is diagonal in the site basis and $B(t)$ has vanishing diagonal. Let
\begin{equation}
c(t):=\|B(t)\|_\infty=\mathcal{C}_{M_{\mathrm{trap}}(t)}(\mathrm{id})
\end{equation}
and 
\begin{equation}\label{eq:gap-assumption}
\Delta_{\mathrm{incoh}}(t):=D_{kk}(t)-\max_{n\neq k}D_{nn}(t)>0\,,
\end{equation}
and assume that $c(t)<\Delta_{\mathrm{incoh}}(t)$.

Since $\rho\mapsto \Tr[M(t)\rho]$ is linear on the convex set $\mathcal{D}_D$, a maximizer of $\eta_{\mathrm{trap}}^{(\max)}(t)$ can be chosen pure. Let $\ket{\Psi_{\max}^{(\eta)}}$ be a normalized eigenvector of $M(t)$ corresponding to the largest eigenvalue 
\begin{align}
\lambda:=\lambda_{\max}(M(t))&=\eta_{\mathrm{trap}}^{(\max)}(t)\,,\,\,\\
M(t)\ket{\Psi_{\max}^{(\eta)}}&=\lambda\,\ket{\Psi_{\max}^{(\eta)}}\,.\label{eq:eigsystem-Mtrap}
\end{align}
Moreover, Weyl's inequality for Hermitian matrices \cite{Bhatia} implies
\begin{equation}
|\lambda_{\max}(M(t))-\lambda_{\max}(D(t))|\le \|B(t)\|_\infty=c(t).
\end{equation}
Since $\lambda_{\max}(D(t))=D_{kk}(t)$, this yields
\begin{equation}\label{eq:DK-proof-lb}
\lambda \geq D_{kk}(t) - c(t) \,.
\end{equation}
For any $n\neq k$, the gap assumption \eqref{eq:gap-assumption} implies that $D_{nn}(t)\leq D_{kk}(t)-\Delta_{\mathrm{incoh}}(t)$, hence combining with \eqref{eq:DK-proof-lb} gives
\begin{align}\label{eq:DK-proof-sep}
\lambda - D_{nn}(t)\geq   \Delta_{\mathrm{incoh}}(t) - c(t)   \quad \forall\,n\neq k\,.
\end{align}

Next, we decompose
\begin{equation}
\ket{\Psi_{\max}^{(\eta)}}=\alpha\,\ket{k}+\ket{x}\,,\quad \langle k|x\rangle =  0\,, \quad |\alpha|^2+\|x\|^2 =   1  \,,
\end{equation}
such that
\begin{equation}\label{eq:DK-proof-orth}
1 - \big|\langle k|\Psi_{\max}^{(\eta)}\rangle\big|^2 =1-|\alpha|^2   = \|x\|^2\,.
\end{equation}
Moreover, let $Q:=\mathbbm{1}-\ket{k}\!\bra{k}$ be the orthogonal projector onto $\ket{k}^\perp$. Applying $Q$ to the eigenvalue equation in Eq.~\eqref{eq:eigsystem-Mtrap} yields
\begin{equation}
Q[D(t)+B(t)]\ket{\Psi_{\max}^{(\eta)}} = \lambda Q\ket{\Psi_{\max}^{(\eta)}}\,. 
\end{equation}
Since $QD(t)\ket{k}=0$ and $Q\ket{\Psi_{\max}^{(\eta)}}=\ket{x}$, we obtain
\begin{equation}\label{eq:DK-proof-proj}
(\lambda\mathbbm{1}-D^\perp(t))\ket{x}=Q B(t)\ket{\Psi_{\max}^{(\eta)}}\,,
\end{equation}
where $D^\perp(t):=Q D(t) Q$ is the restriction of $D(t)$ to $\ket{k}^\perp$.

Because $D(t)$ is diagonal in the site basis, the eigenvalues of $D^\perp(t)$ are $\{D_{nn}(t):n\neq k\}$. Hence \eqref{eq:DK-proof-sep} implies that $\lambda\mathbbm{1}-D^\perp(t)$ is invertible on $\ket{k}^\perp$ and
\begin{align}\label{eq:DK-proof-inv}
\|(\lambda\mathbbm{1}-D^\perp(t))^{-1}\|_\infty &=  \frac{1}{\min_{n\neq k}(\lambda-D_{nn}(t))} \nonumber\\
&\leq \frac{1}{\Delta_{\mathrm{incoh}}(t)-c(t)}\,.
\end{align}
Taking norms in \eqref{eq:DK-proof-proj}, using that $Q$ is a contraction, and $\|\ket{\Psi_{\max}^{(\eta)}}\|=1$, we obtain
\begin{align}
\|x\| &\leq \|(\lambda\mathbbm{1}-D^\perp(t))^{-1}\|_\infty  \|Q B(t)\ket{\Psi_{\max}^{(\eta)}}\| \nonumber\\
&\leq \frac{c(t)}{\Delta_{\mathrm{incoh}}(t)-c(t)}\,.
\end{align}
Then, it follows from \eqref{eq:DK-proof-orth} that
\begin{align}
1-\big|\langle k|\Psi_{\max}^{(\eta)}\rangle\big|^2\leq \left(\frac{\mathcal{C}_{M_{\mathrm{trap}}(t)}(\mathrm{id})}{\Delta_{\mathrm{incoh}}(t)-\mathcal{C}_{M_{\mathrm{trap}}(t)}(\mathrm{id})}\right)^2\,,
\end{align}
which is Eq.~\eqref{eq:DK-overlap}.
\end{proof}

\section{Proof of Corollary \ref{cor:ipr_pr_CM} \label{app:cor-ipr_pr_CM}}

\begin{proof}
Let $\ket{\Psi}=\sum_{n=1}^N \psi_n \ket{n}$ be a normalized donor state and set $\rho=\ket{\Psi}\!\bra{\Psi}$ which satisfies the assumptions in Corollary \ref{cor:ipr_pr_CM}. Then, 
\begin{equation}
\langle\Psi|B(t)|\Psi\rangle  = \langle\Psi|M_{\mathrm{trap}}(t)|\Psi\rangle - \langle\Psi|D(t)|\Psi\rangle\geq \delta\,, 
\end{equation}
and, thus, $|\Tr[B(t)\rho]|\geq \delta$.
Since in addition $B(t)$ has vanishing diagonal, it follows that 
\begin{equation}
|\Tr[B(t)\rho]| \leq \|B(t)\|_{\mathrm{F}} \|(\mathrm{id}-\mathcal{G})(\rho)\|_{\mathrm{F}}\,,
\end{equation}
where $\|\cdot\|_{\mathrm{F}}$ denotes the Frobenius norm. Moreover, in the site basis $\rho_{nm}=\psi_n\psi_m^*$ such that
\begin{align}
\|(\mathrm{id}-\mathcal{G})(\rho)\|_{\mathrm{F}}^2 & =\sum_{n\neq m}|\psi_n|^2|\psi_m|^2  =1-\mathrm{IPR}(\ket{\Psi})\,,
\end{align}
which leads to
\begin{equation}
\delta \leq \|B(t)\|_{\mathrm{F}}\sqrt{1-\mathrm{IPR}(\ket{\Psi})}\,,
\end{equation}
and therefore (assuming $\delta\leq \|B(t)\|_{\mathrm{F}}$ so the right-hand side is nonnegative)
\begin{equation}
\mathrm{IPR}(\ket{\Psi}) \leq 1-\frac{\delta^2}{\|B(t)\|_{\mathrm{F}}^2}\,.
\end{equation}
This proves the first inequality in \eqref{eq:ipr_condition_CM}.

Moreover, on an $N$-dimensional donor space, we have
\begin{equation}
\|B(t)\|_{\mathrm{F}}\le \sqrt{N}\,\|B(t)\|_\infty =\sqrt{N}\,\mathcal{C}_{M_{\mathrm{trap}}(t)}(\mathrm{id})\,,
\end{equation}
hence $\|B(t)\|_{\mathrm{F}}^2\le N\,\mathcal{C}_{M_{\mathrm{trap}}(t)}(\mathrm{id})^2$, and thus
\begin{equation}
\mathrm{IPR}(\ket{\Psi}) \leq 1-\frac{\delta^2}{\|B(t)\|_{\mathrm{F}}^2}  \leq  1 - \frac{\delta^2}{N\,\mathcal{C}_{M_{\mathrm{trap}}(t)}(\mathrm{id})^2}\,.
\end{equation}
Since $\mathrm{PR}(\ket{\Psi})=1/\mathrm{IPR}(\ket{\Psi})$, the above IPR upper bound implies
\begin{equation}
\mathrm{PR}(\ket{\Psi}) \geq \frac{1}{1-\delta^2/\|B(t)\|_{\mathrm{F}}^2}\geq \frac{1}{1-\delta^2/(N\,\mathcal{C}_{M_{\mathrm{trap}}(t)}(\mathrm{id})^2)}\,.
\end{equation}
\end{proof}

\section{Proof of Theorem \ref{thm:lightcone} \label{app:thm-lightcone}}

\begin{proof}
Let $t\geq 0$ be fixed and define $\tau_t:=\Lambda_t^\dagger$, and the Hermitian operator $B(t):=(\mathrm{id}-\mathcal{G}^\dagger)\tau_t(M)$, which is Hermitian because $M$ is Hermitian and $\tau_t$ and $\mathcal{G}^\dagger$ preserve Hermiticity. 
Since $P_S$ is an orthogonal projector, it follows that
\begin{equation}\label{eq:step1}
\mathcal C_M^{(S)}(\Lambda_t) =  \|P_S B(t) P_S\|_\infty\,.
\end{equation}
Moreover, it holds that
\begin{equation}
\|(\mathrm{id} - \mathcal{G}^\dagger)(\tau_t(M))\|_{\infty}\leq 2 \|\tau_t(M)\|_\infty
\end{equation}
such that
\begin{equation}\label{eq:step2}
\|P_S B(t) P_S\|_\infty \leq \|B(t)\|_\infty \leq 2\|\tau_t(M)\|_\infty\,.
\end{equation}
To exploit locality, we now use the quasi-locality assumption \eqref{eq:quasilocality_assumption}
for the observable $M$ supported on $X$.
Let $d:=d(S,X)\ge 1$ and choose $r:=d-1$. Then $S\cap X[r]=\emptyset$, hence for any operator
$O$ supported on $X[r]$ we have $P_S O P_S=0$.
Let $M^{(r)}(t)$ be the operator supported on $X[r]$ whose existence is guaranteed by
\eqref{eq:quasilocality_assumption}. Since $P_S M^{(r)}(t) P_S=0$,
\begin{equation}
P_S\tau_t(M)P_S = P_S\big(\tau_t(M)-M^{(r)}(t)\big)P_S.
\end{equation}
Taking norms and using \eqref{eq:quasilocality_assumption} gives
\begin{align}\label{eq:step3}
\|P_S\tau_t(M)P_S\|_\infty &\leq \|\tau_t(M)-M^{(r)}(t)\|_\infty\nonumber\\
&\leq C_{\mathrm{QL}}\|M\|_\infty \exp \big(v_{\mathrm{QL}}t-\mu(d-1)\big)\,.
\end{align}
Combining \eqref{eq:step1} and \eqref{eq:step3} finally yields
\begin{align}
\mathcal C_M^{(S)}(\Lambda_t) &\leq 2 C_{\mathrm{QL}}\,\|M\|_\infty\,\exp \big(v_{\mathrm{QL}}t-\mu(d-1)\big)\nonumber\\
&= \bigl(2C_{\mathrm{QL}}e^{\mu}\bigr)\,\|M\|_\infty\,\exp \big(v_{\mathrm{QL}}t-\mu d\big)\,,
\end{align}
which is Eq.~\eqref{eq:lightcone_bound} with $C_{\mathrm{LC}}:=2C_{\mathrm{QL}}e^\mu$.
\end{proof}

\bibliography{Refs2}

\end{document}